\mathchardef\mhyphen="2D
\newtheorem{theorem}{Theorem}
\newtheorem{definition}{Definition}
\begin{document}

\title{DSMIX: A Dynamic Self-organizing Mix Anonymous System}
\author{
	\IEEEauthorblockN{Renpeng Zou\IEEEauthorrefmark{1}, Xixiang Lv\IEEEauthorrefmark{1},} 
	
	\IEEEauthorblockA{\IEEEauthorrefmark{1}School of Cyber Engineering, Xidian University, Xian 710071, China} 

}
\maketitle

\begin{abstract}
Increasing awareness of privacy-preserving has led to a strong focus on anonymous  systems protecting anonymity. By studying early schemes, we summarize some intractable problems of anonymous systems. Centralization setting is a universal problem since most anonymous system rely on central proxies or presetting nodes to forward and mix messages, which compromises users' privacy in some way. Besides, availability becomes another important factor limiting the development of anonymous system due to the large requirement of additional additional resources (i.e. bandwidth and storage) and high latency. Moreover, existing anonymous systems may suffer from different attacks including abominable Man-in-the-Middle (MitM) attacks, Distributed Denial-of-service (DDoS) attacks and so on. In this context, we first come up with a BlockChain-based Mix-Net (BCMN) protocol and theoretically demonstrate its security and anonymity. Then we construct a concrete dynamic self-organizing BlockChain-based MIX anonymous system (BCMIX). In the system, users and mix nodes utilize the blockchain transactions and their addresses to negotiate keys with each other, which can resist the MitM attacks. In addition, we design an IP sharding algorithm to mitigate Sybil attacks. To evaluate the BCMIX system, we leverage the distribution of mining pools in the real world to test the system's performance and ability to resistant attacks. Compared with other systems, BCMIX provides better resilience to known attacks, while achieving low latency anonymous communication without significant bandwidth or storage resources.
\end{abstract}

\begin{IEEEkeywords}
Anonymous systems, blockchain, anonymity, self-organizing, mix network attacks.
\end{IEEEkeywords}

\IEEEpeerreviewmaketitle

\section{Introduction}
\IEEEPARstart{K}{eeping} communication private has become increasing important in an era of mass surveillance and carriers-sponsored attacks. Recently, many events about private data leakage in Online Social Networks (OSNs) \cite{sanders2019facebook}, mobile network service (Uber, Didi Chuxing) \cite{hayes2017geolocation} and telephone communications \cite{alexopoulos2017mcmix} have occurred frequently. Thus it is often the case that two parties want to communicate anonymously, which means to exchange messages while hiding the fact that they are in conversation.

In this context, the anonymous communication technology emerges as a critical topic. Aiming to preserve communication privacy within the shared public network environment, anonymous communication mainly focus on how to hide the identities or address information of one side or both sides in communications. Since the seminal work by Chaum \cite{chaum1981untraceable} for anonymous communication, more than seventy anonymous systems have been proposed, based on different anonymous mechanism \cite{lu2019survey}. Generally, anonymous systems can be divided into the following sub-types, mix re-encryption, multicast/broadcast, mix multi-layer encryption and peer-to-peer. 

There are, however, concerns about the lack of efficiency and security in anonymous systems \cite{danezis2008survey},\cite{edman2009anonymity}, as explained below: 
\begin{itemize}
\item {\textbf{Centralization.}} Most of anonymous systems, i.e. Anonymizer \cite{boyan1997anonymizer}, LPWA \cite{gabber1999consistent} and cMix \cite{chaum2017cmix}, rely on central proxies or preset nodes to hide the address information, then still use these proxies to blind and forward messages. The centralized anonymous systems bring privacy leakage risks to users since the service providers can control all proxies and mix node to infer the users' identities. What's worse, the public proxies or preset nodes are easily exposed to attackers. For instance, an attacker can launch distributed denial-of-service attacks to block one or more proxies and thus crash the system.

\item{\textbf{Availability.}} Efficiency and additional resources are the main factors affecting the availability of anonymous systems. High-latency anonymous systems such as OneSwarm \cite{prusty2011forensic} and A3 \cite{sherr2010a3}, though provide high anonymity, are not well suited for practical utilization because of the poor efficiency in terms of intolerable latency \cite{prusty2011forensic}. In order to hide the identities of the recipients, multicast/broadcast and peer-to-peer based anonymous systems consume more bandwidth resources to cover the normal traffics \cite{kong2003anodr}. In spite of effectiveness, users may not be willing to contribute a lot of bandwidth, which hinders the development of such anonymous systems. In addition, some anonymous systems, such as cMix, adopt additional inspection schemes to identify the malicious nodes, which would place an additional burden on users and decrease message utilization. 

\item{\textbf{Security.}} Along the research line about security, anonymous systems based on different mechanism may suffer from different security issues. Some anonymous systems based on MIX technologies rely on fixed cascade nodes to mix and forward messages. Such systems are vulnerable to collusion-tagging attacks which is hard to detect. Re-routing or proxy forwarding based anonymous systems, such as Tor \cite{syverson2004tor} and Tarzan \cite{freedman2002tarzan}, deliver messages through nodes or proxies randomly selected from clusters, hence an eavesdropper can perform traffic analysis attacks and destroy the anonymity. As for multicast/broadcast based anonymous system, an attacker can masquerade as the benign recipients to intercept the messages. With respect to the P2P based
anonymous systems, an attacker can create multiple identities to launch a Sybil attack, which allows the attacker to analyze the forwarding path and impersonate the recipient to receive the messages. Moreover, in anonymous systems applying key exchange schemes, an attacker can employ Man-in-the-Middle (MitM) attacks \cite{conti2016survey} to undermine the security of these systems. 
\end{itemize} 

\subsection{Solutions and Contributions}
Motivated by these identified limitations, we combine blockchain technology with mix network, and design a dynamic self-organizing blockchain-based mix anonymous system. We expect to dispose of the following challenges. 

\textbf{Challenge 1. Designing a decentralized self-organizing anonymous system.} The first challenge we seek to address is the centralization issues. As we mention above, the central proxies or preset nodes might pry into users' private data and reveal the true identities. Therefore, we intend to construct a decentralized anonymous system in which the mix nodes are dynamic and self-organizing.

\textbf{Solution 1.} When it comes to decentralization, the most popular technology is blockchain, an emerging decentralized architecture and distributed computing paradigm underlying Bitcoin \cite{nakamoto2019bitcoin} and other cryptocurrencies. Leveraging the dynamic and decentralized properties of blockchain miners, we devise voting algorithms to elect mix nodes from blockchain miners. This trustless and distributed design not only prevents privacy leakage from service provides, but also mitigates single point failure and DDos attacks.

\textbf{Challenge 2. Designing a user friendly anonymous system.} Another challenge is to construct an anonymous system with high availability. In fact, users are often reluctant to consume more additional resources (i.e. bandwidth and computing resources) and wait for a long time. Thus we are drove to build a high available anonymous system with less additional resources.

\textbf{Solution 2.} Our proposed solution is inspired by cMix. Similar to cMix, we also split the time-consuming, complicated public key operations with the real time phase. The difference is that cMix adopts fixed mix nodes with a stable joint public key while our approach leverages the dynamic blockchain miners to compete for mix nodes, which brings a problem that the elected mix nodes (miners) have to negotiate a joint public key in each round. To avoid interacting with other mix nodes in each round, we firstly propose a revised additive homomorphism mix-net protocol. Then we combine the protocol with Bitcoin account schemes, such that the elected mix nodes can calculate the system public key in a non-interactive manner. It is worth noting that no additional resources are required in BCMIX other than the miners' computing power for solving Bitcoin puzzles.

\textbf{Challenge 3. Designing a secure anonymous system.} The most intractable challenge is building a secure anonymous system. Based on different principles, anonymous systems are assailable to disparate attacks including internal attacks and external attacks. The internal attacks 
, i.e. traffic analysis attacks, DDos attacks and so on, are caused by the design principles of systems while the external attacks, such as MitM attacks and Sybil attacks, arise from the cryptographic protocols or other schemes utilized in anonymous systems. In the case, we intend to construct an anonymous system which can resist the attacks mentioned above.

\textbf{Solution 3.} Through the former schemes we found mix technology can defend against most kind of attacks on anonymous systems except bring the centralization problem and tagging attacks. To mitigate the weakness, our first consideration is combining blockchain and mix technology, as mentioned in Solution 1. Unfortunately, the introduction of blockchain raises the Sybil attacks into the anonymous system. By researching the properties of Sybil attacks we structure PoW voting and IP sharding algorithms to mitigate the impact of Sybil attacks. For MitM attacks, we design a transaction-based key exchange scheme which makes use of the Bitcoin's transaction propagation mechanism to break the single-channel control of attackers. The detailed illustration is provided in Section V and Section VI.

To summarize, the contributions of this paper are as follows.
\begin{itemize}
	\item We propose a blockchain-based mix-net protocol (BCMN) whose security and anonymity are demonstrated theoretically. Especially, we elect miners as mix nodes via special algorithms to avoid the centralized settings. 
	\item We construct a dynamic self-organizing blockchain-based mix anonymous system (upon the proposed BCMN protocol) and discuss how the proposal can satisfy the security requirements. 
	\item We demonstrate the feasibility and effectiveness of the proposed BCMIX by developing the system in an analog network with the miner distribution data in the real world. Compared with existing systems, our system performs well and provides stronger security.
\end{itemize}
\subsection{Related Works}
Generally, anonymous communication systems can be divided into the following sub-types, mix re-encryption, multicast/broadcast, mix multi-layer encryption and peer-to-peer. Mix re-encryption based anonymous systems \cite{gomulkiewicz2004onions},\cite{pereira2017marked} leverage cryptography technologies to dispose messages and hide the users' identities, which can resist traffic analysis attacks. But with the utilizing of public key cryptography, mix re-encryption based anonymous systems is expensive and easy to waste resources. To settle the problem, Chaum et.al \cite{chaum2017cmix} proposed a anonymous system called cMix in 2017. Through a precomputation, the core cMix protocol eliminates all expensive real time public-key operations—at the senders, recipients and mix nodes, thereby decreasing real time cryptographic latency and lowering computational costs for clients. The core real time phase performs only a few fast modular multiplications. The authors claim that cMix can detect the malicious nodes by utilizing Random Partial Checking (RPC) and commitment scheme.

Multicast/broadcast based anonymous communication systems \cite{chaum1988dining}, \cite{kotzanikolaou2017broadcast} achieve anonymity through one-to-many communications among hosts. This method is expensive and inefficient for non-broadcast networks. In the case of large scale networks, an attacker can easily masquerade as the recipient to intercept the message, which further increases the computation and communication overhead required for authentication. 

As for mix multi-layer encryption based anonymous systems \cite{egger2013practical} \cite{syverson2004tor} \cite{piotrowska2017loopix}, one or more proxies are selected from the cluster, and forward the messages from the former nodes and then the messages in a confusing order. The technology can achieve low-latency communications under the premise of ensuring efficiency, but is vulnerable to analyzing attacks such as traffic analysis attacks and sniper attacks\cite{jansen2014sniper}. 

The rapid development of peer-to-peer (P2P) networks drives the research of anonymous communication technology in P2P network environment \cite{chothia2005survey}. In P2P based anonymous systems \cite{ruffing2017p2p}, \cite{han2008mutual} nodes enjoy anonymous services, and provide anonymous services for other nodes in their spare time. Because the P2P network itself has a high degree of self-organization and disorder, and the number of members is large, the P2P network can also provide a high degree of anonymity when the attacker has a huge attack resource. However, because of its openness and anonymity, the attacker can control a large number of zombie nodes to launch witch attacks, and can disguise as normal nodes for traffic analysis, thereby destroying the system's anonymity without being noticed.
\subsection{Roadmap}
The rest of this paper is organized as follows. In Section II, we review some preliminaries and propose an attacks against cMix. In Section III, we illustrate the security model and requirements. Nest, we detail the BCMN protocol and the proposed BCMIX in Section IV. Then we evaluate the performance and demonstrate security respectively in Section V and Section VI. Finally, this paper is concluded in Section VII.
\section{Preliminaries and Proposed Attacks on cMix}
In this section we briefly review the relevant notations and definitions that are used in this paper. Then we describe some attacks against cMix.
\subsection{Elliptic Curve based Cryptographic Primitives}
In this work, we adopt elliptic curves over prime finite field $\mathbb{F}_p$. The elliptic curve $y^2=x^3+ax+b$ over $\mathbb{F}_p$ could be represented as $E_p(a,b)$.
	
\textbf{EC-Elgamal.} The analog of ElGamal crypto system based on ECC, which is known as EC-Elgamal, was first introduced in \cite{koblitz1987elliptic}. It consists of the following algorithms.
\begin{itemize}
	\item $\mathsf{Setup}(1^{\mathcal{K}}).$ The algorithm takes as input the security parameter $\mathcal{K}$, and outputs the elliptic curve $E_p(a,b)$ with base point $G$.
	\item $\mathsf{KeyGen}.$ For the elliptic curve $E_p(a,b)$ with base point $G$, pick $k {\stackrel{R}{\longleftarrow}}{\mathbb{F}_p}$ and compute $K=kG$. Set $PK=K$ and $SK=k$.
	\item $\mathsf{Enc}(m,PK,r).$ For the plaintext point $m$, pick $r {\stackrel{R}{\longleftarrow}}{\mathbb{F}_p}$. Compute $C_1=rG$, $C_2=m+rK$. Set ciphertext points $C=(C_1,C_2)$.
	\item $\mathsf{Dec}(C,SK).$ For the ciphertext $C$, compute $m'=C_2-kC_1$.
\end{itemize}
\begin{definition}
Suppose $p$ is a prime and $E_p(a,b)$ is an elliptic curve. For the two points $G$ and $Q$ on the elliptic curve, they satisfy $Q=kG$. It can be proved that it is easier to calculate $Q$ from $k$ and $G$. However, it is difficult to calculate $k$ from $Q$ and $G$ \cite{luo2019image}.
\end{definition}

The security of ECC is based on Elliptic Curve Discrete Logarithm Problem (ECDLP) which is consider to be computationally infeasible to solve. 

\textbf{ECDH Key Exchange.} Elliptic Curve Diffie-Hellman (ECDH) key exchange is the elliptic cuive analogue of the classical Diffie-Hellman key exchange operating in $\mathbb{Z}_p^*$. We describe two communicating parties, usually called Alice and Bob, establish a shared secret key in secure communication channel as follows. We assume that Alice and Bob use the same set of domain parameters  $D:=(p,a,b,G,n,h)$ for their computations.
\begin{itemize}
	\item Alice generates an ephemeral key pair $(k_A,Q_A)$, i.e. he/she generates a random number $k_A$ in $[1,n-1]$ and then performs a scalar multiplication to get the corresponding public key $Q_A= k_A \cdot G$. Then Alice sends $Q_A$ to Bob.
	\item Bob generates an ephemeral key pair $(k_B,Q_B)$ with $(Q_B=k_B\cdot G)$ in the same way and sends $Q_B$ to Alice.
	\item After Alice receives $Q_B$, he/she performs a scalar multiplication to obtain the shared secret $S=k_A\cdot Q_B$.
	\item After Bob receives $Q_A$ from Alice, he/she obtains the shared secret through computation of $S=k_B\cdot Q_A$
\end{itemize}

The security of the ECDH protocol relies on the intractability of (computational) Elliptic Curve Diffie-Hellman Problem (ECDHP). That is, given an elliptic curve $E_p(a,b)$, a base point $G\in E(\mathbb{F}_P)$, and two points $Q_A=k_A \cdot G$ and $Q_B=k_B \cdot G$, find the point $S=k_A\cdot k_B\cdot G$ without knowledge of $k_A$, $k_B$. It is clear that an algorithm for solving a generic ECDLP instance would allow one to solve the ECDHP as well.
\subsection{Verifiable Random Function.}
A Verifiable Random Function (VRF) \cite{jager2015verifiable} is the public-key version of a keyed cryptographic hash. In this application, a Prover holds the VRF secret key and uses the VRF hashing to construct a hash-based data structure on the input data. Due to the nature of the VRF, only the Prover can answer queries about whether or not some data is stored in the data structure. Anyone who knows the public VRF key can verify that the Prover has answered the queries correctly. A VRF is a triplet of algorithms $\mathsf{VRF}:=(\mathsf{Gen},\mathsf{Eval},\mathsf{Vfy})$ providing the following functionalities.
\begin{itemize}
	\item $\mathsf{Gen}(1^{\mathcal{K}}).$ The key generation algorithm is a probabilistic algorithm that takes as input the security parameter $\mathcal{K}$ and outputs a key pair $(vk,vsk)$. We say that $vsk$ is the secret key and $vk$ is the verification key.
	\item $\mathsf{Eval}(vsk,X).$ The deterministic algorithm on input the secret key $vsk$ and $X\in \{0,1\}^k$ and outputs a function value $Y\in\mathcal{Y}$, where $\mathcal{Y}$ is a finite set, and a proof $\pi$. We write $V_{vsk}(X)$ to denote the function value $Y$ computed by $\mathsf{Eval}$ on input $(vsk,X)$.
	\item $\mathsf{Ver}(vk,X,Y,\pi).$ The verification algorithm takes as input $(vk,X,Y,\pi)$ and outputs a bit $b\in\{0,1\}$ indicating whether or not $\pi$ is a valid proof.
\end{itemize}

\textbf{Blockchain Basics.} We review some basic components of a proof-of-work blockchain \cite{garay2015bitcoin}. We define a transaction $tx:=( \overrightarrow{inputs},\overrightarrow{outputs},sig)$, where $\overrightarrow{inputs}$ and $\overrightarrow{outputs}$ are the inputs and outputs of a UTXO-based model, $sig$ is the signature signed by the transaction sender. A block is a triple of the form $B:=(s,x,txs,ctr)$, $s\in\{0,1\}^{\mathcal{K}}, x\in\{0,1\}^*,ctr\in \mathbb{N}$, where $s$ is the state of the previous block, $x$ is the data and $ctr$ is the proof of work of the block. A block $B$ is valid iff\\
$$\mathsf{validBlock}^D(B):=H(ctr,T(s,x,txs))<D.$$

Here, $H:\{0,1\}^*\rightarrow\{0,1\}^{\mathcal{K}}$ and $T:\{0,1\}^*\rightarrow\{0,1\}^{\mathcal{K}}$ are cryptographic hash functions, and the parameter $D\in \mathbb{N}$ is the difficulty level of the block.

A chain is simply a chain of blocks, that we call $\mathcal{C}$. The rightmost block is called the head of the chain, denoted by $\mathsf{Head}(\mathcal{C})$. Any chain $\mathcal{C}$ with a head $\mathsf{Head}(\mathcal{C}):=(s,x,ctr)$ can be extended to a new longer chain $\mathcal{C}':=\mathcal{C}||B'$ by attaching a block $B':=(s',x',ctr')$ such that $s'=H(ctr,G(s,x))$; the head of the new chain $\mathcal{C}'$ is $\mathsf{Head}(\mathcal{C}):=B'$. We let $\mathcal{C}:=\varepsilon$ to express a chain $\mathcal{C}$ is empty. The function $\mathsf{len}(\mathcal{C})$ denotes the length of a chain $\mathcal{C}$.

For a chain $\mathcal{C}$ of length $n$ and any $q>0$, we denote by $\mathcal{C}^{\ulcorner q}$ the chain resulting from removing the $q$ rightmost blocks of $\mathcal{C}$, and analogously we denote by $\mathcal{C}^{\urcorner q}$ the chain resulting in removing the $q$ leftmost blocks of $\mathcal{C}$; note that if $q\ge n$ then $\mathcal{C}^{\ulcorner q}:=\varepsilon$ and  $\mathcal{C}^{\urcorner q}:=\varepsilon$. If $\mathcal{C}$ is a prefix of $\mathcal{C}'$ we write $\mathcal{C}\prec \mathcal{C}'$. We also leverage $slot$ which is defined in [\cite{}]. A slot is the continuous amount of divided time. Each slot $slot_l$ is indexed for $l\in\{1,2,3,\cdots\}$. We assume that users have a synchronised clock that indicates the current time down to the smallest discrete unit.

\textbf{Blockchain protocol.} With the illustrations of the basic components, we describe the blockchain protocol \cite{garay2015bitcoin} $\Gamma=(\Gamma.\mathsf{KeyGen},\Gamma.\mathsf{Update},\Gamma.\mathsf{Validate},\Gamma.\mathsf{Broadcast})$ as follows.
\begin{itemize}
	\item $\{pk,sk\}\leftarrow\Gamma.\mathsf{KeyGen}$: The algorithm generates the key pair $(pk,sk)$ of the blockchain nodes.
	\item $\{\mathcal{C}',\perp\}\leftarrow \Gamma.\mathsf{Update}$: This algorithm returns a longer and valid chain $\mathcal{C}$ in the network (if it exists), otherwise returns $\perp$.
	\item $\{0,1\}\leftarrow \Gamma.\mathsf{Validate}$: The validity check algorithm takes as inputs a transaction $tx$, a block $B$ or a chain $\mathcal{C}$ and returns 1 iff the transaction, the block or the chain is valid according to a public set of rules.
	\item $\Gamma.\mathsf{Broadcast}$: The algorithm takes as inputs some $tx$s and broadcasts it to all the nodes of the blockchain system.
\end{itemize}

The security of a PoW blockchain protocol $\Gamma$ is characterized by three properties, namely: $Chain$ $Growth$, $Chain$ $Quality$ and $Common$ $Prefix$ \cite{garay2015bitcoin}.

\textbf{Chain growth.} The chain property quantifies the number of blocks that are added to the blockchain during any given number of slots.
\begin{definition}(Chain Growth).
	Consider the chains $\mathcal{C}_1$, $\mathcal{C}_2$ possessed by two honest parties at the onset of two slots $slot_1$, $slot_2$, with $slot_2$ at least $s$ slots ahead of $slot_1$. Then it holds that $\mathsf{len}(\mathcal{C}_2)-\mathsf{len}(\mathcal{C}_1)\ge \tau \cdot s$, for $s\in \mathbb{N}$ and $0<\tau\le1$, where $\tau$ is the speed coefficient.
\end{definition}
 
\textbf{Chain Quality.}The chain quality property informally states that the ratio of adversarial blocks in any segment of a chain held by a honest party is no more than a fraction $\mu$, where $\mu$ is the fraction of resources controlled by the adversary.

\begin{definition}(Chain Quality).
	Consider a portion of length $\ell$-blocks of a chain possessed by an honest party during any given slot intervals, for $\ell\in\mathbb{N}$. Then, the ratio of adversarial blocks in this $\ell$ segment of the chain is at most $\mu$, where $0<\mu\le1$ is the chain quality coefficient.
\end{definition}

\textbf{Common Prefix.} The common prefix property informally says that if we take the chains of two honest nodes at different times slots, the shortest chain is a prefix of the longest chain.  
\begin{definition}(Common Prefix).
	The chains $\mathcal{C}_1$, $\mathcal{C}_2$ possessed by two honest parties at the onset of the slots $slot_1<slot_2$ are such that $\mathcal{C}_1^{\ulcorner k}\preceq \mathcal{C}_2$, where $\mathcal{C}_1^{\ulcorner k}$ denotes the chain obtained by removing the last $k$ blocks from $\mathcal{C}_1$, where $k\in \mathbb{N}$ is the common prefix parameter.
\end{definition}	
\subsection{cMix Anonymous System}
The cMix protocol by Chaum et al.\cite{chaum2017cmix} is a new mix-net protocol which aims to provide an anoymous communication tool for users at large scales. In contrast with existing mix-net systems, cMix provides significant performance and security upgrades. 
\begin{figure}
	\centering
	\includegraphics[height=8cm, width=9cm]{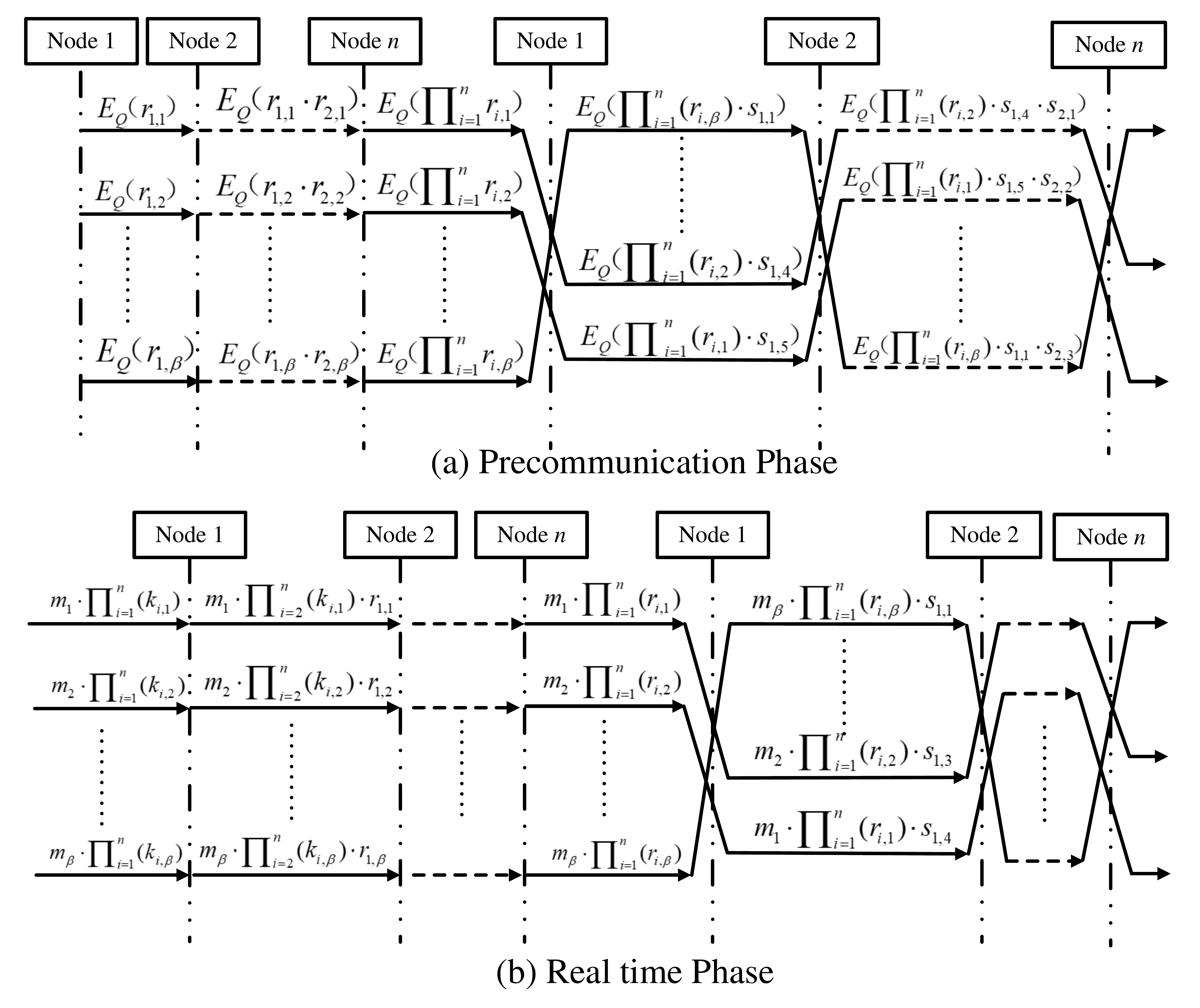}
	\caption{Workflow of cMix.}
	\label{fig1}
\end{figure}
Figure \ref{fig1} briefly describes the workflow of cMix. The protocol contains two participants: $Users:=(U_1,\cdots,U_{\beta})$ and $mix\, nodes:=(N_1,\cdots,N_n)$. Each node $N_i$ holds a tuple of the form $(\pi_i,r_{i,j},s_{i,j},K_{i,t},E(\cdot),D(\cdot))$, where $\pi_i$ is a random permutation, $r_{i,j},s{i,j}\in \mathbb{G} (i\neq j)$ is the random elements in cyclic group $\mathbb{G}$, $K_{i,t}\in \mathbb{G} $ denotes the shared group element between node $N_i$ and user $U_t$, $E(\cdot)$ and $D(\cdot))$ denotes the encryption and decryption algorithm of Elgamal. The detailed description of cMix protocol is provided in Appendix A.

We now present a collision tagging attack on cMix protocol. The attacker have to compromise the last node $N_l$ and any mix node $N_i$. To launch the attack, only small changes are needed to the protocol:
\begin{itemize}
	\item \textbf{Precomputation Phase- Step 3:} The mix node $N_i$ calculate the decryption share $\mathcal{D}(\vec{C}_1)$ with the vector $(1,\cdots,t^{-1},\cdots,1)$ and commit to the $\mathcal{D}(\vec{C}_1)$.
	\item \textbf{Real time Phase- Step 1:} The mix node $N_i$ adds tag $(1,\cdots,t,\cdots,1)$ to $\vec{m}$, and sends $\vec{m}$ to the next mix node. 
	\item \textbf{Real time Phase- Step 3:} The last node publish the output of the mixing step $\Pi_h(\vec{m}\times \vec{R}_h)\times \vec{T}_h$. Afterwards, all mix nodes release their decryption shares $\mathcal{D}_i(\vec{C}_1)$ and the message component of the ciphertext $\vec{C}_2$. The mix node $N_i$ waits for other mix nodes to publish their decryption first, then $N_i$ collude with $N_l$ to obtain the message package with the tagged messages. After that, $N_i$ publish $\mathcal{D}(\vec{C}_1)$.
\end{itemize}

The mix node $N_i$ and $N_l$ get the location of the tag message in advance, and the mixed messages are the same from the senders' perspective. Thus the attacker can break the anonymity of cMix.
\section{Security Model and Requirements}
In this section, we present our proposed system model for BCMIX and the related security requirements. The communication methods among them include transactions and Transport Layer Security (TLS), where the former is an on-chain communication (i.e., publishing a transaction using P2P communications) and the latter is an off-chain communication (i.e., establishing a secure communication channels among mix nodes).
\subsection{System Model}
There are three entities in our proposed BCMIX, that is, Miners, Mix nodes and Senders (see Figure \ref{fig2}). 
\begin{itemize}
	\item \textbf{Miners:} These entities validate new transactions and record them on the global ledger. Simultaneously, the entities compete to solve a difficult mathematical puzzle based on a cryptographic hash algorithm. In BCMIX, miners are eligible to become mix nodes through PoW algorithm competition. Miners disclose their addresses in the form of $(address_M,pk_M)$ where $address_M$ is the blockchain address and $pk_M$ is the blockchain public key deriving the related address.
	\item \textbf{Mix nodes:} These entities are selected from miners through the PoW and VRF algorithm. After being elected as mix nodes successfully, these entities firstly negotiate keys with senders in the set up phase. Then they execute the precomputation and real time phase to encrypt and mix the messages during the duty period and pass the message down. Besides, they should commit to their computations and send special transactions to blockchain network for subsequent auditing.
	\item \textbf{Senders:}  These entities refers to BCMIX users, who hold the respective accounts $address_U,pk_U$. Before accessing to the anonymous service, senders send transactions to mix nodes and negotiate the corresponding keys with mix nodes. Then in the real time phase, senders blind messages with the shared keys and send the message to the first mix nodes.
\end{itemize}
\begin{figure}
	\centering
	\includegraphics[height=5cm, width=9cm]{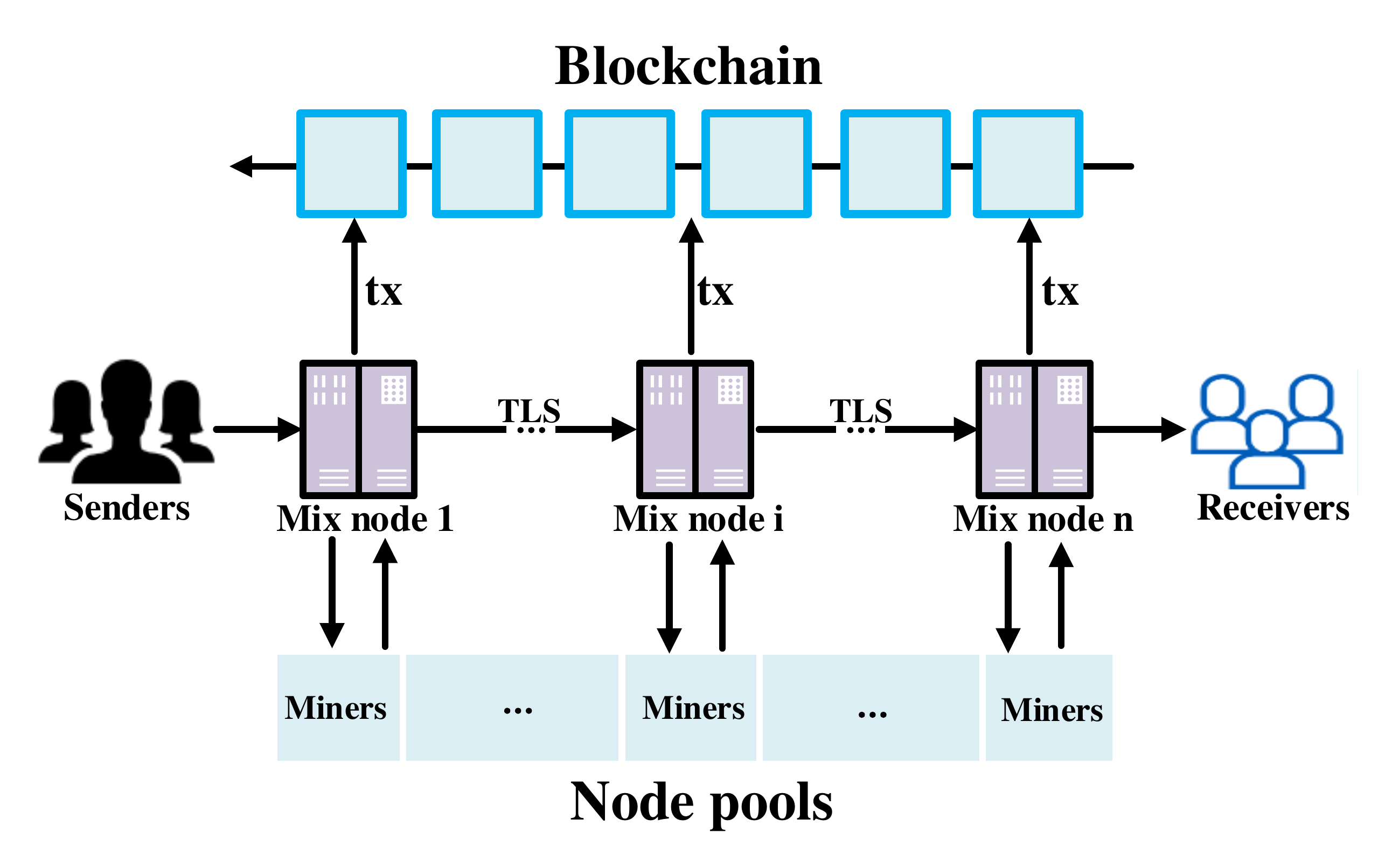}
	\caption{Architecture of blockchain-based mix anonymous system. Before anonymity phase, we elect miners in node pools to serve as mix nodes via designed algorithms. During anonymity phase, mix nodes mix and forward messages from users through Transport Layer Security (TLS). Meanwhile, mix nodes sends special transactions to blockchain for auditing.}
	\label{fig2}
\end{figure}
\subsection{Threat Model}
BCMIX assumes authenticated communication channels among all mix nodes Therefore,  we consider a malicious adversary (a.k.a Byzantine), who can delay, drop, eavesdrop, forward, and delete messages between mix nodes, but not modify, replay, or inject new ones, without detection. For any communication not among mix nodes, we assume the adversary can  delay, drop, re-order, eavesdrop, modify, or inject messages at any point of the network. BCMIX accepts one message per user per batch, starting the pre-computation once the batch reaches $\beta$ messages.

The adversary can also create a lot of accounts and compromise an arbitrary numbers of users. In addition, we assume the adversary can control more than 50\% of the system computing powers. However, such adversary is not able to read the contents of the messages. We assume the security of the used cryptographic primitives, including a secure hash function and a secure signature scheme.
\subsection{Security Requirements}
According to the existing literature \cite{lu2019survey}, \cite{yu2019repucoin}, \cite{hohenberger2014anonize}, BCMIX needs to satisfy the following fundamental security requirements.
\begin{itemize}
	\item {\textbf{Resistance to Sybil Attacks.}} BCMIX should minimize the possibility of an attacker being successfully selected as multiple mix nodes at the same time. 
	\item {\textbf{Resistance to Collision Tagging Attacks.}} BCMIX should prevent collision attackers from performing tagging attacks to link a message to a certain sender.
	\item {\textbf{Sender Anonymity.}} BCMIX provide sender anonymity for users. That is, every mix node performs mixing operations on messages, destroying the associations between senders and receivers. Thus an attacker can not associate teh export messages with a certain sender.
	\item {\textbf{Resistance to MitM attacks.}} BCMIX should prevent an attacker from replacing the shared keys between senders and mix nodes.
	\item {\textbf{Single Point of Failure Resilience.}}  In case of single point of failure (e.g. a mix node is under  denial of service attacks or the mix node is crashed), BCMIX should detect the failure and guarantee the system keep operating.
	\item {\textbf{Resistance to Other Attacks.}} BCMIX should resist common attacks on mix-net system such as replay attacks, traffic-analysis attacks and so on. 
\end{itemize}                                                                                                                                                                                               
\section{Proposed BCMix System}
In this section, we will present our construction of BCMIX system. We first introduce an additive homomorphism mix-net protocol and then we propose the blockchain based mix-net protocol. Thereafter, we describe the concrete BCMIX system.                             
\subsection{Additive Homomorphism Mix-net Protocol}
To solve the MitM attacks of key agreement process and the dependency on trusted entities, we replace Elgamal in cMix \cite{chaum2017cmix} with EC-Elgamal and propose an additive homomorphism mix-net protocol $\Delta$ to integrate the mix-net protocol with the blockchain.

Our mix-net protocol contains two participants, $Senders:=(U_1,\cdots,U_{\beta})$ and $mix\, nodes:=(N_1,\cdots,N_n)$, where the mix nodes are selected from the blockchain miners. Mix nodes negotiate keys with senders by means of a special transaction $tx_{\mathit{KE}}$ and calculate the system public key through their address pair $(address_N^i,pk_N^i)$, where $pk_M^i=Q_i$. (We will describe these two processes in the next part). We suppose a authenticated communication among mix nodes and we denote it as $\Delta.\mathsf{TLS}$. The proposed mix-net protocol is a tuple of algorithms $(\mathsf{Setup},\mathsf{Precom},\mathsf{RealTime})$. The notations and the processes of the protocol $\Delta$ are presented in Table I and Algorithm \ref{alg1}.
\begin {table}[htbp]
\centering\caption{Notations of The Revised Mix-net Protocol}
	\footnotesize
	\label{tab1}
\begin{tabular}{l p{7cm}}
	\Xhline{0.7pt}
	\textbf{Symbol} &            \textbf{Description}  \\
	\Xhline{0.7pt}
	$d_i$& the secret share for mix node $N_i$ of the secret key $d$, $d_i\in\mathbb{F}_p$;\\
	\Xhline{0.7pt}
	$Q_i$& the public key of mix node $N_i$, $Q_i=d_iG$;\\
	\Xhline{0.7pt}
	$\boldsymbol{Q}$& the public key of the system, $\boldsymbol{Q}=\sum_iQ_i$;\\
	\Xhline{0.7pt}
	$\mathcal{E}_Q()$& $\mathcal{E}_Q(m)=(x\cdot G,m+x\cdot \boldsymbol{Q}),x\in\mathbb{F}_p$;\\
	\Xhline{0.7pt}
	$\mathcal{D}_{d_i}()$& the decryption share of mix node $N_i$, $\mathcal{D}_{d_i}(\sum_i{x}_i\cdot G)=(\sum_i{x}_i)d_i\cdot G$;\\
	\Xhline{0.7pt}
	$r_{i,a}$ & random values (freshly generated for each round) of mix node $N_i$ for groove $a$;  \\
	\Xhline{0.7pt}
	$s_{i,a}$ & random values (freshly generated for each round) of mix node $N_i$ for groove $a$; \\
	\Xhline{0.7pt}
	$\pi_i$ & a random permutation of the $\beta$ grooves used by mix node $N_i$; \\
	\Xhline{0.7pt}
	$\boldsymbol{\Pi}_i$ & the permutation performed by BCMix through mix node $N_i$; \\
	\Xhline{0.7pt}
	$k_{i,j}$    & a group element shared between mix node $N_i$ and the sending user for groove $j$. These values are used as keys to blind messages;  \\  
	\Xhline{0.7pt}
	$\boldsymbol{k}_{i}$    &  the vector of derived secret keys shared between mix node $N_i$ and all users in a batch, i.e. $\boldsymbol{k}_{i}=(k_{i,1},\cdots,k_{i,\beta})$; \\
	\Xhline{0.7pt}
	$K_{j}$    &  the product of all shared keys for the sending user of slot $j$, i.e. $K_{j}=\prod_{i=1}^nk_{i,j}$ ; \\
	\Xhline{0.7pt}
	$M_j $ & the message sent by user $j$. Like other values in the system, these vlaues are group elements;\\
	\Xhline{0.7pt}
	$\boldsymbol{R}_i$,     & the product of all local random $\boldsymbol{r}$ values through mix node $N_i$, $\boldsymbol{R}_i=\prod_{j=1}^i\boldsymbol{r}_j$;\\
	\Xhline{0.7pt}
	 $\boldsymbol{S}_i$&the product of all local random $\boldsymbol{s}$ values through mix node $N_i$,  $\boldsymbol{S}_i=\begin{cases}
	 \boldsymbol{s}_1 & i=1\\
	 \pi_i(\boldsymbol{S}_{i-1})\times  \boldsymbol{s}_i & 1<i\leq n  
	 \end{cases}$.\\
	 \Xhline{0.7pt}
\end{tabular}
\end{table}
\subsection{Basic Components of the Proposed Blockchain Protocol}
We build our blockchain protocol $\Gamma^*$ by extending and modifying the aforementioned protocol $\Gamma$. We first define the basic components in our blockchain protocol. 

\textbf{Transaction.} Our blockchain contains three types of transaction, namely normal transaction $tx_{\mathit{N}}$, key-exchange transaction $tx_{\mathit{KE}}$ and commitment transaction $tx_{\mathit{COM}}$. The normal transaction $tx_{\mathit{N}}$ is the same definition of the transaction that in protocol $\Gamma$. We define a key-exchange transaction $tx_{\mathit{KE}}:=(\mathit{KE},pk,\overrightarrow{inputs},\overrightarrow{outputs},sig)$ and a commitment transaction $tx_{\mathit{COM\mathit}}:=(\mathit{COM},commitment,sig)$, where $\mathit{KE}$, $\mathit{COM}$ denotes the transaction type, $pk$ is the blockchain public key of the sender, $commitment$ is the commitment value issued by mix nodes and $\overrightarrow{inputs}$, $\overrightarrow{outputs}$, $sig$ are the same as the above definition. A key-exchange transaction is used to negotiate keys between senders and mix nodes, while a commitment transaction is released to supervise the behavior of mix nodes.

\textbf{Block and Chain.} According to the transaction type involved, we define three blocks respectively called main block $B_{\mathit{M}}$, key-exchange block $B_{\mathit{KE}}$ and commitment block $B_{\mathit{COM}}$, where $B_{\mathit{M}}:=(s_{\mathit{ke}},s_{\mathit{com}},x,tx_{\mathit{N}}s,ctr)$, $B_{\mathit{KE}}:=(s,x,tx_{\mathit{KE}}s)$ and  $B_{\mathit{COM}}:=(s,x,tx_{\mathit{COM}}s)$. Here $s$ is the state of the previous block, $s_{\mathit{ke}}$ is the state of previous  blocks, $x$ is the data and ctr is the proof of work of the block. 

A chain $\mathcal{C}$ is the form of $\mathcal{C}:=(\mathcal{C}_{\mathit{M}},\mathcal{C}_{\mathit{KE}},\mathcal{C}_{\mathit{COM}})$, where $\mathcal{C}_{\mathit{M}}$, $\mathcal{C}_{\mathit{KE}}$ and $\mathcal{C}_{\mathit{COM}}$ respectively represent main chain, key-exchange chain and commitment chain in Figure \ref{fig3}. We define a stable main block $B_{\mathit{M}}^i$ as the origin of the key-exchange chain $\mathcal{C}_{\mathit{KE}}$ and commitment chain $\mathcal{C}_{\mathit{COM}}$.
\begin{figure}
	\centering
	\includegraphics[height=5cm, width=8cm]{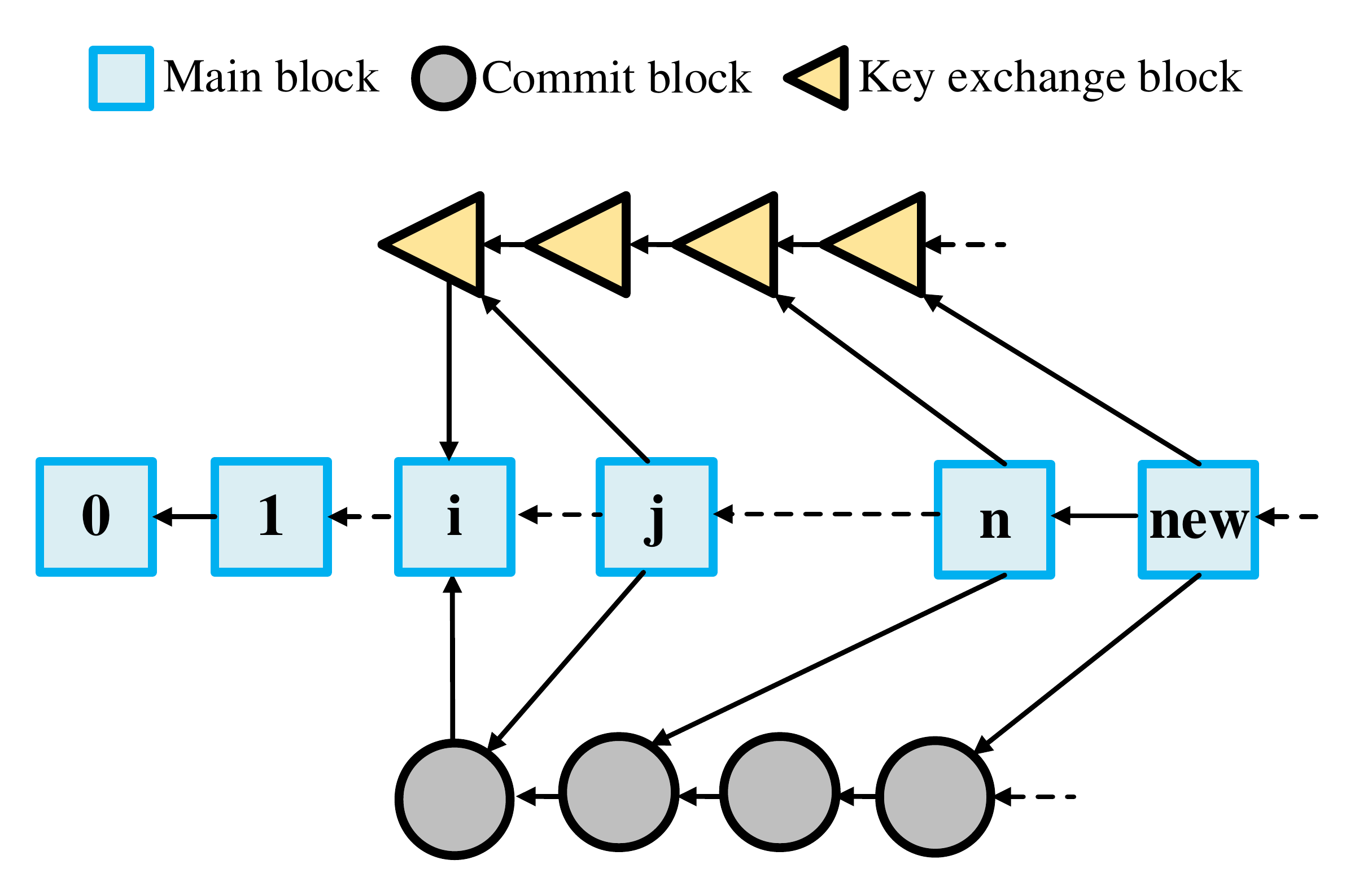}
	\caption{Chain Structure of BCMIX}
	\label{fig3}
\end{figure}
\begin{definition}
	Let two chains $\mathcal{C}_1$, $\mathcal{C}_2$ are possessed by two honest parties at the onset of the slots $slot_1<slot_2$, if $\mathcal{C}_1^{\ulcorner w}\preceq \mathcal{C}_1$ and $\mathcal{C}_1^{\ulcorner (w+1)}\npreceq \mathcal{C}_1$, then we call $\mathcal{C}_{\mathit{M}}^w$ is a stable main chain and $B_{\mathit{M}}^i$ where $i\in[1,w]$ are stable main blocks.
\end{definition}	
Note that the $B_{\mathit{KE}}$ and the $B_{\mathit{COM}}$ do not contain proof of work, thus an adversary can manipulate the contents of these blocks. To avoid the problem we specify that the latest main block $B_{\mathit{M}}^{latest}$ always contains the states of $B_{\mathit{KE}}^{latest}$ and $B_{\mathit{COM}}^{latest}$. We further give out the definition of valid blocks.
\begin{definition}
	We say blocks $B_{\mathit{N}}$, $B_{\mathit{KE}}$ and $B_{\mathit{COM}}$ are valid iff 
	\begin{itemize}
		\item  The transactions contained in $B_{\mathit{N}}$, $B_{\mathit{KE}}$ and $B_{\mathit{COM}}$ are valid;
		\item  For any two main blocks $B_{\mathit{N}}^i:=(s_{\mathit{ke}}^{p},s_{\mathit{com}}^{q},x^i,tx_{\mathit{N}}^is,ctr^i)$ and $B_{\mathit{N}}^j:=(s_{\mathit{ke}}^{m},s_{\mathit{com}}^{n},x^j,tx_{\mathit{N}}^js,ctr^j)$, where $i<j$. $p\ge m$ and $q \ge n$ hold;
		\item  $ H(ctr,T(s_{\mathit{ke}},s_{\mathit{com}},x,txs))<D$.
		\end{itemize}	
\end{definition}
\begin{algorithm*}[htbp]
	\caption{\centerline {The Additive Homomorphism Mix-net Protocol $\Delta$}}
	\footnotesize
	\label{alg1}
	\begin{algorithmic}[1]
		\Procedure{\textbf{Setup Phase}}{}\\
		$(Q,K_i,k_{i,j})\leftarrow\mathsf{Setup}(pk_U^i,pk_M^j)$: This algorithm is invoked by senders $U_i$ with address $(address_U^i,pk_U^i)$ and mix nodes $M_j$ with address $(address_N^j,pk_N^j)$, where $i\in[1,\beta]$,  $j\in[1,n]$. The algorithm takes as inputs the public key of senders and mix nodes and returns a shared key $k_{i,j}$ between a send $U_i$ and a mix node $M_j$. A slot key $K_i=\prod_{j=1}^{n}k_{i,j}$ and the system public key $Q=\sum_jpk_M^j=\sum_jQ_j$.
		\EndProcedure
		\Procedure{\textbf{Precomputation Phase}}{}\quad$\mathsf{Precom}:=(\mathsf{Preprocess},\mathsf{Mix},\mathsf{Postprocess})$
		\begin{itemize}
			\item $\mathsf{Preprocess}$. The mix nodes generate the fresh $\boldsymbol{r},\boldsymbol{s},\pi$ values and computes the encryption $\mathcal{E}(\boldsymbol{r}_i^{-1})$. At the same time the mix nodes issues the commitment values of the fresh values $COM_{r},COM_{s},COM_{\pi}$. Then they collectively compute the product of the received values by sending the following message to the next mix node:
			$$\mathcal{E}_{\boldsymbol{Q}}(\boldsymbol{R}_i)=\begin{cases}
			\mathcal{E}_{\boldsymbol{Q}}(\boldsymbol{r}_1) &  i=1 \\
			\mathcal{E}_{\boldsymbol{Q}}(\boldsymbol{R}_{i-1})\times \mathcal{E}_{\boldsymbol{Q}}(\boldsymbol{r}_i)  &  1<i \leq n
			\end{cases}$$
			Eventually, the last mix node sends the final values $\mathcal{E}_{\boldsymbol{Q}}(\boldsymbol{R}_n)$ to the first mix node as input for the next step and issues the commitment value $COM_{\mathcal{E}_{\boldsymbol{Q}}(\boldsymbol{R}_n)}$.
			\item $\mathsf{Mix}$. The mix node together mix the values and compute the results $\boldsymbol{\Pi}_n(\boldsymbol{R}_n)\times\boldsymbol{S}_n$, under encryption. The mix nodes perform this mixing by having each mix node $i$ send the following message to the next mix node:
			$$\mathcal{E}_{\boldsymbol{Q}}(\boldsymbol{\Pi}(\boldsymbol{R}_n)\times\boldsymbol{S}_i)=\begin{cases}
			\pi_1(\mathcal{E}_{\boldsymbol{Q}}(\boldsymbol{R}_n)\times\mathcal{E}_{\boldsymbol{Q}}(\boldsymbol{s}_1))&  i=1\\
			\pi_i(\mathcal{E}_{\boldsymbol{Q}}(\boldsymbol{\Pi}_{i-1}(\boldsymbol{R}_n)\times \boldsymbol{S}_{i-1}))\times\mathcal{E}_{\boldsymbol{Q}}(\boldsymbol{s}_i)  &  1<i \leq n
			\end{cases}$$
			As with the first step, the last mix node sends the final encrypted values $\mathcal{E}_{\boldsymbol{Q}}(\boldsymbol{\Pi}_n(\boldsymbol{R}_n)\times\boldsymbol{S}_n)$ to the first mix node.
			\item $\mathsf{Postprocess}$. To complete the precomputation, each mix node $N_i$ computes its decryption shares $\mathcal{D}_{d_i}(\sum_i{x}_i\cdot G)$, where $(x,c)=\mathcal{E}_{\boldsymbol{Q}}(\boldsymbol{\Pi}_n(\boldsymbol{R}_n)\times\boldsymbol{S}_n)$, and keep its secret. Then each mix node issues the commitment values $COM_{\mathcal{D}_{d_i}}$ of their secret shares. The message parts $c$ are multiplied with all the decryption shares to retrieve the plaintext values $\boldsymbol{\Pi}_n(\boldsymbol{R}_n)\times\boldsymbol{S}_n$. The last mix node to be used in the real time phase stores the decrypted precomputed values.
		\end{itemize}
		\EndProcedure
		\Procedure{\textbf{RealTime Phase}}{}\quad$\mathsf{RealTime}:=(\mathsf{Preprocess},\mathsf{Mix},\mathsf{Postprocess})$\\
		Each user constructs its message $MK_j^{-1}$ (for slot $j$) by multiplying the message $M_j$ and it sends it to the first mix node, which collects all messages and combines them to get a vector $\boldsymbol{M}\times \boldsymbol{K}^{-1}$.
		\begin{itemize}
			\item $\mathsf{Preprocess}$. Each node $N_i$ sends $\boldsymbol{k}_i \times \boldsymbol{r}_i$ to the next mix node, which uses them to compute $\boldsymbol{M}\times \boldsymbol{R}_n=\boldsymbol{M}\times\boldsymbol{K}^{-1}\times \boldsymbol{\prod}_{i=1}^n\boldsymbol{k}_i\times \boldsymbol{r}_i$ and the last node sends the result to the first node.
			\item $\mathsf{Mix}$. Each node $N_i$ computes  $\pi_i(\boldsymbol{\Pi}_{i-1}(\boldsymbol{M}\times\boldsymbol{R}_n)\times \boldsymbol{S}_{i-1})\times\boldsymbol{s}_i$, where $\boldsymbol{\Pi}_0$ is the identity permutation and $\boldsymbol{S}_0=1$. The last node sends a commitment to its message $\boldsymbol{\Pi}_n(M\times \boldsymbol{R}_n)\times \boldsymbol{S}_n$ to every other node.
			\item  $\mathsf{Postprocess}$. Each node $N_i$ opens its precomputed decryption share for $(\boldsymbol{x},\boldsymbol{c})=\mathcal((\boldsymbol{\Pi}_n(\boldsymbol{R}_n)\times \boldsymbol{S}_n)^{-1})$, while the last node sends its decryption share multiplied by the value in the previous step and the message component: $\boldsymbol{\Pi}_n(\boldsymbol{M}\times\boldsymbol{R}_n)\times \boldsymbol{S}_n\times\mathcal{D}_n(\boldsymbol{x})\times \boldsymbol{c}$. Finally, the permuted message can be decrypt as $\boldsymbol{\Pi}_n(\boldsymbol{M})=\boldsymbol{\Pi}_n(\boldsymbol{M}\times\boldsymbol{R}_n)\times\boldsymbol{S}_n\times\prod_{i=1}^{n}\mathcal{D}_i(\boldsymbol{x})\times\boldsymbol{c}$.
		\end{itemize}
		\EndProcedure
		\rule[0pt]{17.65cm}{0.063em}\\
		\begin{scriptsize}
			\textbf{Note:} The shared key $k_{ij}$ in \textbf{Setup Phase} is generated through ECDH protocol. After the mix node set $(N_1,N_2,\cdots,N_n)$ is selected from miners. Every sender sends $tx_{\mathit{KE}}$s to all mix nodes.
			Since senders and mix nodes know the public key of each other, they can execute the ECDH protocol and obtain the shared key $k_{ij}$.
		\end{scriptsize}
	\end{algorithmic}
\end{algorithm*}
\textbf{Blockchain based mix-net protocol.} Below we propose a blockchain based mix-net protocol $\Gamma^*$ which combines the basic blockchain protocol $\Gamma$ with the proposed mix-net protocol $\Delta$. The protocol $\Gamma$ has copies of all the basic functionalities exposed by $\Gamma$ and $\Delta$ through the interfaces described above, and adds additional algorithms including VRF, IP Sharding in order to resist Sybil attacks. We describe the protocol $\Gamma^*:=(\mathsf{Setup},\mathsf{Update},\mathsf{Vote},\mathsf{Mix},\mathsf{Verify},\mathsf{Broadcast})$ as follows.
\begin{itemize}
	\item $\mathsf{Setup}$. System parameters involved in our construction are $\{E_p(a,b),\mathbb{G},G,\mathcal{K},H,T\}$, where $E_p(a,b)$ is a non-singular elliptic curve, $\mathbb{G}$ is a cyclic group which consists of all points on $E_p(a,b)$, as well as the point at infinity $\mathcal{O}$, $G$ is the base point of $E_p(a,b)$, $\mathcal{K}$ is the security parameter and $H:\{0,1\}^*\rightarrow\{0,1\}^{\mathcal{K}}$, $T:\{0,1\}^*\rightarrow\{0,1\}^{\mathcal{K}}$ are cryptographic hash functions. Entities invoke the algorithm to generate the blockchain key pair $(pk,sk):=(qG,q)$, the address $(address,pk)$ and the VRF key pair $(vk,vsk)$. Here $q\in \mathbb{F}_p$ is selected by entities.
	\item $\mathsf{Update}$. This algorithm first invoke $\Gamma.\mathsf{validate}$ to validate the new transactions, chains and blocks. Then returns a longer valid chain $\mathcal{C}:=(\mathcal{C}_M,\mathcal{C}_{KE},\mathcal{C}_{COM})$ in the network (if it exists), otherwise returns $\bot$.
	\item $\mathsf{PoWVote}$. The algorithm take as inputs the data $x$, the state $s$ and a difficulty level $D'<D$, where D is the system difficulty level, and then computes $H(ctr',T(s,x))$ where $ctr'$ is a random string.  Thereafter, the algorithm  estimate whether $H_i(ctr',T(s,x))<D'$. If so the algorithm outputs $1$ and add the miner into a set $\mathcal{M}$, otherwise returns $0$.
	\item $\mathsf{IPSharding}$. This algorithm takes as input IP prefix of miners in set $\mathcal{M}$, and outputs $n$ node pools (Miners in each node pool have the same IP prefix $IP/z$ where $z$ is a integer and $z\in[0,32]$.)
	\item $\mathsf{VRF}$. The algorithm takes as inputs $IP/z$ and the current slot $slot_l$, and outputs a value $Y\in\mathcal{Y}$, where $\mathcal{Y}$ is a finite set, and a proof $\pi$.  
	\item $\mathsf{Mix}$. The algorithm take as inputs a sender set $\mathcal{U}:=(U_1,U_2,\cdots,U_{\beta})$, a mix node set $\mathcal{N}:=(N_1,N_2,\cdots,N_n)$ and a message vector $\boldsymbol{M}$. Then it revokes $(\Delta.\mathsf{Setup},\Delta.\mathsf{Precom},\Delta.\mathsf{RealTime})$ and outputs a permuted message vector $\boldsymbol{\Pi}_n(\boldsymbol{M})$.
	\item $\mathsf{Verify}$. Given a message vector $\boldsymbol{M}:=(m_1,m_2,\cdots,m_{\beta})$ and a message vector $\boldsymbol{M}:=(m_{(1)},m_{(2)}),\cdots,m_{(\beta)})$, this algorithm decides whether the following conditions holds: For $\forall i\in[1,\beta]$, $m_i\in \boldsymbol{M}$, $\exists j\in[1,\beta]$, $m_j\in\boldsymbol{M}$, $m_i=m_j$. If so, the algorithm returns 1. Otherwise the algorithm returns 0.
	\item $\mathsf{Broadcast}$: The algorithm takes as inputs some $tx$s and address $(address,pk)$ and broadcasts them to all the nodes of the blockchain system.
\end{itemize}
\begin{figure*}
	\centering
	\includegraphics[height=8cm, width=18cm]{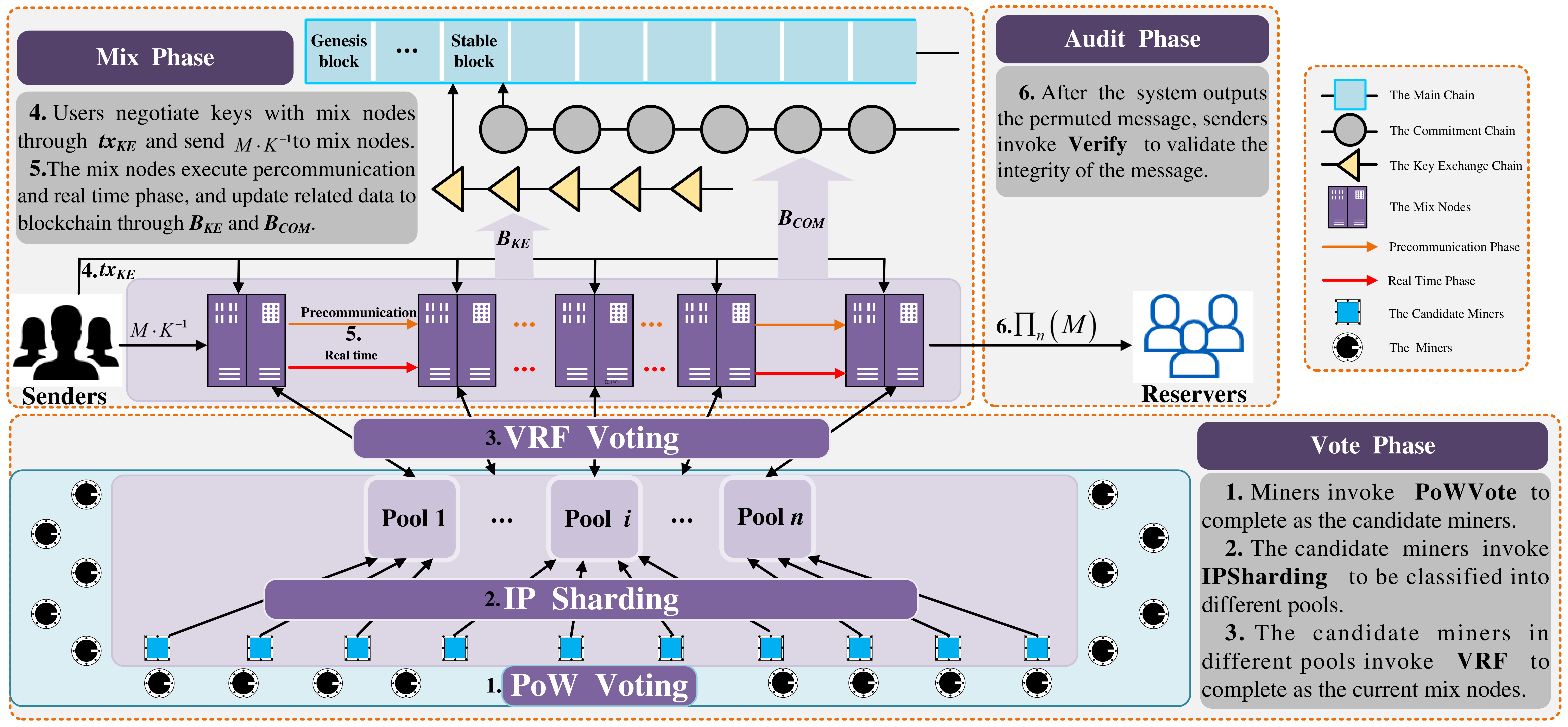}
	\caption{The detailed orchestration of BCMIX.}
	\label{fig4}
\end{figure*}
\begin{theorem}
 If $\Gamma$ satisfies $(\tau,s)$-chain growth, then $\Gamma^*$ satisfies $(\tau,s)$-chain growth. 
\end{theorem}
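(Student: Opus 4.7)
The plan is to observe that the chain-growth property in $\Gamma^*$ is really a statement about the main chain $\mathcal{C}_M$, and that $\mathcal{C}_M$ is produced by exactly the same proof-of-work mechanism as $\Gamma$. Concretely, a valid main block $B_{\mathit{M}}:=(s_{\mathit{ke}},s_{\mathit{com}},x,tx_{\mathit{N}}s,ctr)$ in $\Gamma^*$ must satisfy $H(ctr,T(s_{\mathit{ke}},s_{\mathit{com}},x,txs))<D$, which, up to packing the extra fields $s_{\mathit{ke}},s_{\mathit{com}}$ into the payload hashed by $T$, is structurally identical to the validity predicate $H(ctr,T(s,x,txs))<D$ of the underlying $\Gamma$. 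Since $T$ is modeled as a cryptographic hash function, an adversary's or honest miner's ability to find a nonce $ctr$ satisfying the inequality depends only on $D$, not on the particular bit-length of the hashed tuple. Thus the per-slot probability of extending $\mathcal{C}_M$ is identical in the two protocols.

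Next I would argue that the additional validity conditions introduced by $\Gamma^*$ do not restrict honest miners from producing a new main block in any slot in which $\Gamma$ would have allowed one. The monotonicity constraint $p\ge m$ and $q\ge n$ on the referenced $\mathcal{C}_{KE}$ and $\mathcal{C}_{COM}$ states can always be satisfied by an honest miner by simply referencing its own current views of $\mathcal{C}_{KE}^{latest}$ and $\mathcal{C}_{COM}^{latest}$, which are non-decreasing over slots. Hence every slot in which an honest party could have extended the chain under $\Gamma$ also allows extension in $\Gamma^*$; equivalently, there is an injection from the set of $\Gamma$-blocks produced between $slot_1$ and $slot_2$ into the set of $\Gamma^*$-main-blocks produced in the same interval.

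Putting these two observations together, the proof would proceed as follows. Fix two honest parties with views $\mathcal{C}_1=(\mathcal{C}_{M,1},\mathcal{C}_{KE,1},\mathcal{C}_{COM,1})$ and $\mathcal{C}_2=(\mathcal{C}_{M,2},\mathcal{C}_{KE,2},\mathcal{C}_{COM,2})$ at the onsets of slots $slot_1$ and $slot_2$, with $slot_2$ at least $s$ slots ahead of $slot_1$. Since the main chain $\mathcal{C}_M$ in $\Gamma^*$ and a chain in $\Gamma$ share the same PoW validity and the extra structural constraints are always satisfiable by honest miners, the $(\tau,s)$-chain growth guarantee of $\Gamma$ applied to $\mathcal{C}_{M,1}$ and $\mathcal{C}_{M,2}$ gives $\mathsf{len}(\mathcal{C}_{M,2})-\mathsf{len}(\mathcal{C}_{M,1})\ge \tau\cdot s$. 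Defining $\mathsf{len}(\mathcal{C}):=\mathsf{len}(\mathcal{C}_M)$ consistently with how chain length is used throughout the paper, this immediately yields $(\tau,s)$-chain growth for $\Gamma^*$.

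The main obstacle I anticipate is the subtle point that the PoW difficulty in $\Gamma^*$ really is unchanged by the enlargement of the block header; this requires the random-oracle (or at least collision-resistant, preimage-resistant) modeling of $H$ and $T$ to guarantee that adding the $s_{\mathit{ke}},s_{\mathit{com}}$ fields does not alter the hash-query success rate. A secondary subtlety is ruling out an adversarial strategy that withholds $B_{\mathit{KE}}$ or $B_{\mathit{COM}}$ blocks to force honest miners to stall; this is handled by noting that an honest miner is never required to wait for adversarial $\mathcal{C}_{KE}$ or $\mathcal{C}_{COM}$ blocks, since it may reference its own latest known states (in the worst case the empty chain), so the production of $B_M$ is never blocked.
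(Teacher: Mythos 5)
Your proposal is essentially correct, but it takes a genuinely different route from the paper's own argument. The paper decomposes the composite chain $\mathcal{C}:=(\mathcal{C}_{M},\mathcal{C}_{KE},\mathcal{C}_{COM})$ and argues chain growth for \emph{each} component separately: for $\mathcal{C}_{M}$ it simply asserts that, since the side chains carry no proof of work and are updated independently, the main chain inherits $(\tau,s)$-chain growth from $\Gamma$; for $\mathcal{C}_{KE}$ and $\mathcal{C}_{COM}$ it argues that these blocks are jointly generated by the mix nodes at expected speeds $\tau_{KE}$ and $\tau_{COM}$, so $\mathsf{len}(\mathcal{C}_{KE}^2)-\mathsf{len}(\mathcal{C}_{KE}^1)=\tau_{KE}\cdot s$ and likewise for $\mathcal{C}_{COM}$. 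You instead collapse the statement to the main chain by \emph{defining} $\mathsf{len}(\mathcal{C}):=\mathsf{len}(\mathcal{C}_M)$, and then give a noticeably more careful justification than the paper does for why $\mathcal{C}_M$ grows at the same rate: the PoW predicate $H(ctr,T(s_{ke},s_{com},x,txs))<D$ has the same success probability per query as in $\Gamma$ despite the enlarged header, the monotonicity constraints $p\ge m$, $q\ge n$ never block an honest miner, and an adversary cannot stall honest miners by withholding side-chain blocks. These are real gaps in the paper's one-line assertion that your argument fills. What your route gives up is any guarantee on $\mathcal{C}_{KE}$ and $\mathcal{C}_{COM}$; since the paper's chain object is explicitly a triple and its proof treats all three components, you should either justify that the chain-growth property for $\Gamma^*$ is intended to refer only to the PoW main chain, or add the (easy, but paper-style) observation that the side chains are extended deterministically by the mix nodes at their own expected rates. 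As written, your proof is sound under your stated convention for $\mathsf{len}$, but that convention is doing nontrivial work and should be flagged as an interpretive choice rather than a fact from the paper.
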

\begin{proof}
	We note that the side chain $\mathcal{C}_{KE}$ and $\mathcal{C}_{COM}$ do not contain proof of work, and the update of $\mathcal{C}_{KE}$, $\mathcal{C}_{COM}$ is independent of the update of $\mathcal{C}_{M}$. Thus we conclude that $\mathcal{C}_{M}$ satisfies $(\tau,s)$-chain growth as the chain $\mathcal{C}$ in $\Gamma$. We now prove that $\mathcal{C}_{KE}$ and $\mathcal{C}_{COM}$ satisfy $(\tau,s)$-chain growth.
	
	Consider the chains $\mathcal{C}_{KE}^1,\mathcal{C}_{KE}^2$ and $\mathcal{C}_{COM}^1,\mathcal{C}_{COM}^2$ at the onset of two slots $slot_1,slot_2$, with $slot_2$ at least $s$ slots ahead of $slot_1$, and $\tau_{KE},\tau_{COM}$ are the speed coefficient of $\mathcal{C}_{KE}$ and $\mathcal{C}_{COM}$. Since a key exchange block $B_{KE}$ and a commitment block $B_{COM}$ are jointly generated by all mix nodes at the expected speed $\tau_{KE}$ and $\tau_{COM}$ respectively, we derive that $\mathcal{C}_{KE}$ and $\mathcal{C}_{COM}$ are updated at the expected speed $\tau_{KE}$ and $\tau_{COM}$. Then we hold the following two equations $\mathsf{len}(\mathcal{C}_{KE}^2)-\mathsf{len}(\mathcal{C}_{KE}^1)=\tau_{KE}\cdot s$ and $\mathsf{len}(\mathcal{C}_{COM}^2)-\mathsf{len}(\mathcal{C}_{COM}^1)=\tau_{COM}\cdot s$. Thus we conclude that $\mathcal{C}_{KE}$ and $\mathcal{C}_{COM}$ satisfies $(\tau,s)$-chain growth. In summary, $\Gamma^*$ satisfies $(\tau,s)$-chain growth. 
\end{proof}
\begin{theorem}
	Let $H,G$ be two collision-resistant cryptographic hash functions. If $\Gamma$ satisfies $(\mu,\ell)$-chain quality, then $\Gamma^*$ satisfies $(\mu,\ell)$-chain quality. 
\end{theorem}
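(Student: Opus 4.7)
The plan is to decompose $\Gamma^*$'s chain $\mathcal{C}=(\mathcal{C}_{M},\mathcal{C}_{KE},\mathcal{C}_{COM})$ into its three constituent chains and argue chain quality for each, leveraging the fact that the side chains $\mathcal{C}_{KE}$ and $\mathcal{C}_{COM}$ are anchored to $\mathcal{C}_{M}$ through the state fields $s_{\mathit{ke}}$ and $s_{\mathit{com}}$ embedded in every main block (Definition of valid blocks). Since a block is ultimately the triple of matching entries across the three chains, it suffices to show that in any $\ell$-segment the adversary controls at most a $\mu$ fraction of main blocks, of key-exchange blocks, and of commitment blocks.

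First, I would observe that the main chain $\mathcal{C}_{M}$ inherits the PoW construction of $\Gamma$ verbatim: $\mathsf{validBlock}^D$ has the same form, the difficulty $D$ is unchanged, and mining proceeds on the same hash $H(ctr,T(s_{\mathit{ke}},s_{\mathit{com}},x,txs))<D$. Hence by direct reduction, any violation of $(\mu,\ell)$-chain quality on $\mathcal{C}_{M}$ would yield a violation of $(\mu,\ell)$-chain quality on the chain of $\Gamma$, by stripping the extra state field. Thus $\mathcal{C}_{M}$ satisfies $(\mu,\ell)$-chain quality.

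Next, I would handle $\mathcal{C}_{KE}$ and $\mathcal{C}_{COM}$. Since these side blocks carry no proof of work, I cannot reduce to $\Gamma$'s chain quality on them directly; instead I would exploit the pinning relationship. By the validity rule, the latest main block contains $s_{\mathit{ke}}$ and $s_{\mathit{com}}$, the states of the latest side-chain heads. Consequently, every side-chain block accepted by an honest party must lie on the chain whose head hash is committed inside some honest main block. Collision resistance of $H$ and $T$ implies that the adversary cannot forge a distinct side-chain block matching a committed state, so each main block uniquely determines the prefix of $\mathcal{C}_{KE}$ and $\mathcal{C}_{COM}$ that an honest party accepts. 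Moreover, the key-exchange and commitment blocks are jointly produced by the mix nodes, who are themselves elected from miners via $\mathsf{PoWVote}$; hence the adversarial fraction among mix nodes inherits the bound $\mu$ from $\Gamma$'s chain quality (up to the same collision-resistance caveat on $H$). Combining these two facts in an $\ell$-segment yields the ratio bound $\mu$ for side chains.

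The main obstacle will be the second step: there is no direct notion of ``mining'' a $B_{KE}$ or $B_{COM}$, so one cannot argue quality by counting PoW solutions. My handle on this is the anchoring via $s_{\mathit{ke}},s_{\mathit{com}}$ in honest main blocks, together with collision resistance of $H,G$, which forces any adversarial tampering with a side chain to be either detectable (invalid by Definition) or traceable to an equivalent tampering with $\mathcal{C}_{M}$. Once that reduction is in place, the $(\mu,\ell)$-quality for each component chain follows, and therefore so does $(\mu,\ell)$-chain quality for $\Gamma^*$.
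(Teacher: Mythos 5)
Your core argument matches the paper's: the main chain $\mathcal{C}_{M}$ inherits $(\mu,\ell)$-chain quality from $\Gamma$ by direct reduction, and the side chains $\mathcal{C}_{KE},\mathcal{C}_{COM}$ are handled by observing that they carry no proof of work but are pinned to honest main blocks through the committed states $s_{\mathit{ke}},s_{\mathit{com}}$, so that tampering with a side chain forces the adversary either to mine more than a $\mu$ fraction of main blocks or to find a hash collision -- exactly the two cases the paper enumerates. The one place you diverge is the added claim that ``the adversarial fraction among mix nodes inherits the bound $\mu$ from $\Gamma$'s chain quality'': this step is both unnecessary (the anchoring argument already suffices, and the paper does not use it) and not well-founded, since $\mu$ bounds the fraction of adversarial \emph{blocks} in a main-chain segment, not the fraction of adversarial \emph{nodes} elected by $\mathsf{PoWVote}$, which runs at a different difficulty $D'<D$ and is further filtered by IP sharding and the VRF; moreover each $B_{KE}$ and $B_{COM}$ is jointly produced by all mix nodes, so attributing individual side blocks to an adversary is not meaningful in the way chain quality requires. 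Dropping that sentence leaves you with essentially the paper's proof.
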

	\begin{proof}
		We emphasize that proof of work is not contained in the key exchange chain $\mathcal{C}_{KE}$ and the commitment chain $\mathcal{C}_{COM}$. And the security of $\mathcal{C}_{KE}$, $\mathcal{C}_{COM}$ is depend on the security of the main chain $\mathcal{C}_M$ since $\mathcal{C}_M$ takes the states of $\mathcal{C}_{KE}$ and  $\mathcal{C}_{COM}$ as inputs of the proof of work. Suppose an adversary $\mathcal{A}$ wants to manipulate contents of $\mathcal{C}_{KE}$ and $\mathcal{C}_{COM}$. According to the formula $H(ctr,T(s_{ke},s_{com},x,txs))<D$ (Definition 7.), $\mathcal{A}$ has to amend the corresponding main block $B_M$. Note that the capabilities of the adversary and external environment in $\Gamma^*$ are exactly the same as that in $\Gamma$. Thus we can conclude that the main chain $\mathcal{C}_M$ is the only factor affecting the chain quality property. We show below that $\mathcal{A}$ has only a negligible probability of violating chain quality of $\Gamma^*$.
		
		Let us denote by $B_M^i$ the $i$-th block of the main chain $\mathcal{C}_M$ at some slot intervals so that $\mathcal{C}_M:=B_M^1\cdots B_M^{\mathsf{len}(\mathcal{C})}$. From Definition 4. we know that the number of main blocks generated by $\mathcal{A}$ in chain $\mathcal{C}_M$ are at most $\mu\cdot \mathsf{len}(\mathcal{C})$. According to \cite{garay2015bitcoin}, $\mathcal{A}$ can not generate more than $\mu\cdot \mathsf{len}(\mathcal{C})$ main blocks with the current computing hash power. Or the adversary $\mathcal{A}$ could try to build an valid candidate block $B_M^{\star}$ to replace a validate block $B_M^j$ generated by honest parties in $\mathcal{C}_M$, where $j\in[1,\mathsf{len}(\mathcal{C})]$, $B_M^{\star}\neq B_M^j$ and $H(B_M^{\star})=H(B_M^j)$. By the collision-resistance property of hash function we can draw a conclusion that the adversary has only a negligible chance of producing such a candidate block $B_M^{\star}$ where $H(B_M^{\star})=H(B_M^j)$. Hence $\Gamma^*$ satisfies $(\mu,\ell)$-chain quality. 
	\end{proof}

\begin{theorem}
	Let $B_M^w$ be a stable block which is the genesis of the chain $\mathcal{C}_{KE}$ and $\mathcal{C}_{COM}$. If $\Gamma$ satisfies $k$-common prefix, then $\Gamma^*$ satisfies $k$-common prefix.
\end{theorem}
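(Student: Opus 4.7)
The plan is to decompose $\mathcal{C}^{*} := (\mathcal{C}_{M},\mathcal{C}_{KE},\mathcal{C}_{COM})$ componentwise, exactly as in the proofs of Theorem 1 and Theorem 2, and establish $k$-common prefix for each of the three chains separately. First I would argue that $\mathcal{C}_{M}$ inherits $k$-common prefix directly from $\Gamma$: the main block format $B_{M} = (s_{ke}, s_{com}, x, tx_{N}s, ctr)$ merely augments the input to $T$ inside the PoW predicate $H(ctr, T(s_{ke}, s_{com}, x, txs)) < D$, while leaving the difficulty level, mining dynamics, and fork-choice rule untouched. Hence, for any two honest parties holding $\mathcal{C}_{M}^{1}, \mathcal{C}_{M}^{2}$ at slots $slot_{1} < slot_{2}$, the assumption on $\Gamma$ yields $\mathcal{C}_{M}^{1\ulcorner k} \preceq \mathcal{C}_{M}^{2}$.

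Next, I would lift this property to the side chains via an anchoring argument that exploits the fact that each valid main block embeds state pointers $s_{ke}^{p_{i}}, s_{com}^{q_{i}}$ with $p_{i}, q_{i}$ monotonically non-decreasing along $\mathcal{C}_{M}$ (Definition 7). Consider $B_{M}^{w}$, the stable main block serving as genesis for $\mathcal{C}_{KE}$ and $\mathcal{C}_{COM}$; by Definition 6, $B_{M}^{w}$ lies inside $\mathcal{C}_{M}^{1\ulcorner k}$, so both honest parties agree on the embedded pointers $s_{ke}^{p_{w}}, s_{com}^{q_{w}}$. Because the state of a side-chain block is produced by the collision-resistant hashes $H$ and $T$, the pre-image prefix of $\mathcal{C}_{KE}$ ending at the block whose state equals $s_{ke}^{p_{w}}$ is uniquely determined; symmetrically for $\mathcal{C}_{COM}$. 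Therefore $\mathcal{C}_{KE}^{1}$ and $\mathcal{C}_{KE}^{2}$ must coincide on the entire prefix up to the block referenced by the stable main block, and the same holds for commitments, yielding $\mathcal{C}_{KE}^{1\ulcorner k'} \preceq \mathcal{C}_{KE}^{2}$ and $\mathcal{C}_{COM}^{1\ulcorner k''} \preceq \mathcal{C}_{COM}^{2}$ for appropriate parameters $k', k''$.

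The main obstacle is calibrating the pruning parameters $k', k''$ for the side chains in terms of the main-chain parameter $k$. Since $B_{KE}$ and $B_{COM}$ are produced at rates $\tau_{KE}, \tau_{COM}$ (from the proof of Theorem 1) while main blocks are produced at rate $\tau$, there is a window of side-chain blocks sitting beyond the latest stable main block whose states have not yet been anchored. I would bound this window using the chain growth bounds already established, expressing $k'$ and $k''$ as functions of $k, \tau, \tau_{KE}, \tau_{COM}$. The key observation closing the argument is that any adversarial rollback of a side-chain block that was already referenced by a main block within $\mathcal{C}_{M}^{1\ulcorner k}$ would require producing a distinct block with the same state—violating collision resistance—or else replacing $B_{M}^{w}$ itself, which contradicts the $k$-common prefix of $\mathcal{C}_{M}$. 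Combining the three per-chain conclusions then gives $k$-common prefix of $\mathcal{C}^{*}$, so $\Gamma^{*}$ satisfies $k$-common prefix as claimed.
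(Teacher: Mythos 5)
Your proof is sound but takes a genuinely different route for the side chains. The paper's own argument is much shorter: it handles $\mathcal{C}_{M}$ exactly as you do (adversary and honest behavior in $\Gamma^{*}$ are unchanged from $\Gamma$, so the main chain inherits $k$-common prefix), but for $\mathcal{C}_{KE}$ and $\mathcal{C}_{COM}$ it simply asserts that these chains \emph{cannot fork at all}, because they carry no proof of work and are uniquely generated by the mix nodes, so they satisfy common prefix trivially. You instead derive side-chain agreement from main-chain agreement via the anchoring of the states $s_{ke}, s_{com}$ inside valid main blocks (Definition 7), plus collision resistance of $H$ and $T$, and you bound the un-anchored window using the chain-growth rates $\tau, \tau_{KE}, \tau_{COM}$. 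Your version is arguably the more honest one: the paper itself concedes, just after defining the block types, that an adversary \emph{can} manipulate the contents of $B_{KE}$ and $B_{COM}$ precisely because they lack proof of work, and that the defense is the state reference in $B_{M}^{latest}$ --- which is exactly the mechanism your proof uses (and which the paper only deploys in its proof of Theorem~2, for chain quality). The trade-off is that your argument yields pruning parameters $k', k''$ for the side chains that may exceed $k$, so strictly speaking you obtain $\max(k,k',k'')$-common prefix for the composite chain rather than $k$-common prefix verbatim; the paper avoids this bookkeeping only by its blanket non-forking assertion. Neither gap is fatal, but you should state explicitly which parameter the composite guarantee carries.
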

\begin{proof}
	Note that the chain $\mathcal{C}_{KE}$ and $\mathcal{C}_{COM}$ will not fork since they don not contain proof-of-work and are uniquely generated bt the mix nodes. Hence $\mathcal{C}_{KE}$ and $\mathcal{C}_{COM}$ satisfy the $k$-common prefix property. We recall that the behaviors of the adversary and the honest parties in $\Gamma^*$ are exactly the same as that in $\Gamma$. Thus according to the literature \cite{}, the main chain satisfies $k$-common prefix property. This concludes the proof.
\end{proof}

Given the above, the tuple $\Gamma^*:=(\mathsf{Setup},\mathsf{Update},\mathsf{Vote},\mathsf{Mix},\mathsf{Verify})$ is a secure blockchain protocol which satisfies the properties of \textit{$(\tau,s)$-chain growth}, \textit{$(\mu,\ell)$-chain quality} and \textit{$k$-common prefix}.
\subsection{System Design} 

With the description of the blockchain-based mix-net protocol, we propose the dynamic self-organizing blockchain-based mix anonymous system in this part. Our BCMIX system consists of four phases, namely: \textbf{System Initialization}, \textbf{Vote}, \textbf{Mix}, \textbf{Audit}. The orchestration of BCMIX is detailed in Figure \ref{fig4}.

\textbf{System Initialization}. This phase initializes the system parameters and generate accounts for participants. The system determine the public system parameters $\{E_p(a,b),\mathbb{G},G,\mathcal{K},H,T\}$ (e.g. $\mathsf{secp256k1}$ of Bitcoin). After that the entities (i.e., users and miners) invoke $\Gamma^*\mathsf{Setup}$ to generate their key pairs (i.e., user key pair $(pk_u^i,sk_u^i):=(q_u^iG,q_u^i)$ and miner key pair $(pk_m^j,sk_m^j):=(q_m^jG,q_m^j)$) and addresses (i.e., user address $(address_U^i,pk_U^i)$ and miner address $(address_N^j,pk_N^j)$ respectively). In additon,  miners also generate their VRF key pairs, which are used to complete as mix nodes.

\textbf{Vote}. This phase is executed by the miners to elect the candidate mix nodes for mixing messages.
\begin{itemize}
	\item [1)] Firstly, miners invokes $\Gamma^*.\mathsf{PoWVote}$ to solve the puzzle $D'$. If the algorithm $\Gamma^*.\mathsf{PoWVote}$ returns 1, then the miner $\mathcal{M}_i$ is added to the candidate set $\mathcal{M}$. Otherwise miners re-select inputs $s,x$ to compute $H(ctr',T(s,x))$ until $H(ctr',T(s,x))<D'$.
	\item [2)] After a period of certain time, e.g. $T$, $\mathcal{M}$ stops accepting new miners. Then the candidates miners run $\Gamma^*.\mathsf{IPSharding}$ (Algorithm \ref{alg2}) and join the node pool with the same IP prefix.
	\begin{algorithm}
		\caption{  \centerline{The IPSharding algorithm.}}
		\label{alg2}
		\begin{algorithmic}[1]
			\Require
			The IP address of the candidate miners, where $ip_i:=A_1.A_2.*.*$. Here $A_1$, $A_2$ and $*$ denote four decimal segments; the system parameter $\sigma$;    
			\Ensure
			 $k$ mix node pools;
	     	\State  $coordinate_{ip_m}:=(A_1^m,A_2^m)$;
	     	\State  $|Num_{2^p,2^q}|$ denotes the number of nodes distributed in $(2^p,2^q)$;
	     	\For {$j=0$,$j<=2$}
		    	\For {$i=0$,$i<=7$}
		        	\If {$A_j^m<2^i$ or $A_j^m>2^{i+1}$}
		            	\State $i++$;
		        	\Else\quad {break};
		        	\EndIf
		         	\State $j++$;
               \EndFor
            \EndFor
            \State Classify the coordinate into $k$ parts such that $\frac{|Num_{2^p,2^q}|_j}{|Num_{2^p,2^q}|_{min}}<=\sigma$, $k\in[1,k]$;\\
			\Return $k$ mix node pools.
		\end{algorithmic}
	\end{algorithm}
	\item [3)] After that the candidate miners $\mathcal{M}_i$ in each node pool take the $IP/z$ and current slot $slot_l$ as inputs, and invoke $\Gamma^*.\mathsf{VRF}$ to generate a value $Y_i\in\mathcal{Y}$ and the corresponding proof $\pi$. Then the candidate miner who holds the smallest value $Y$ in each node pool is elected as the mix node in the current slot. Finally, the elected mix nodes are networked in the order of $Y$.
\end{itemize}

\textbf{Mix}. In this phase, the current mix nodes $\mathcal{N}:=(N_1,N_2,\cdots,N_n)$ first negotiate shared keys with users $\mathcal{U}:=(U_1,U_2,\cdots,U_{\beta})$. Then the current mix nodes blind and mix messages $\boldsymbol{M}:=(\boldsymbol{M}_1,\boldsymbol{M}_2,\cdots,\boldsymbol{M}_{\beta})$ for users.

To negotiate keys with each other through $\mathsf{ECDH}$ protocol, the mix nodes and the users do as follows. 
\begin{itemize}
	\item [1)] After the current mix nodes are selected, they run $\Gamma^*.\mathsf{Broadcast}$ to disseminate their addresses and the VRF parameters, i.e. $(address,pk)||(vk,Y,\pi)$, to the participants in the blockchain network.
	\item [2)] Upon receiving the addresses and parameters of all mix nodes, a user $U_i$ first invoke $\mathsf{VRF}.\mathsf{Ver}$ to verify the received. If the algorithm returns 1, then the user $U_i$ sends transactions $tx_{KE}^i$ to every mix nodes. Then the mix nodes parse $tx_{KE}^i$ as $tx_{\mathit{KE}}:=(\mathit{KE},pk,\overrightarrow{inputs},\overrightarrow{outputs},sig)$ and obtain the public key of each user.
	\item [3)] Finally the mix nodes $N_j$ and the user $U_i$ multiply their secret key with the public key of each other, and obtain the shared key $k_{ij}=q_U^i\cdot q_N^j\cdot G$.
\end{itemize}

After negotiating keys with users, the mix nodes network blind and mix messages as follows.
\begin{itemize}
	\item [1)] The mix nodes invoke $\Delta.\mathsf{Precom}$ to compute the parameters $\mathcal{E}_{\boldsymbol{Q}}(\boldsymbol{\Pi}_n(\boldsymbol{R}_n)\times\boldsymbol{S}_n)$ utilized in the real time phase and update the corresponding commitment values $COM_r$, $COM_s$, $COM_{\pi}$, $COM_{D_{d_i}}$ to the commitment chain $\mathcal{C}_{COM}$ through the commitment transaction $tx_{COM}$.
	\item [2)] After receiving the blind messages $\boldsymbol{M}\times \boldsymbol{K}^{-1}$ from the users, the mix nodes invoke the algorithm $\Delta.\mathsf{RealTime}$ to mix the received messages and obtain the permuted message $\boldsymbol{\Pi}_n(\boldsymbol{M})$. 
\end{itemize}

\textbf{Audit}. In this phase, the users invoke $\Gamma^*.\mathsf{Verify}$ to check the integrity of the permuted messages. If the algorithm $\Gamma^*.\mathsf{Verify}$ returns 1 to all the users, then the permuted messages are considered integrated. If $\Gamma^*.\mathsf{Verify}$ returns 0 to some users, then the users ask the current mix nodes to check their calculation through the corresponding commitment block $B_{COM}$ and identify the malicious mix nodes. The malicious mix node will be removed out of BCMIX.
\section{Performance Evaluation}
This section presents the performance evaluation of BCMIX. We firstly theoretically analyze the number of candidate miners. Then we describe the implementation of BCMIX and evaluate its performance.
\begin{table*}[]
	\centering
	\scriptsize
	\caption{Time cost (IN RM s) for different number of mix nodes and users to execute the algorithms}
	\begin{threeparttable}
		\begin{tabular}{llllllllllllll}
			\Xhline{1pt}
			\multirow{2}{*}{\diagbox{Alg}{Time}{Number}}     &  \rule{0pt}{7pt} Mix Node   & \multicolumn{3}{c}{3} & \multicolumn{3}{c}{5} & \multicolumn{3}{c}{7}& \multicolumn{3}{c}{9} \\ \Xcline{2-14}{1pt} \rule{0pt}{8pt} 
			& \quad User       & 10    & 50    & 100    & 10    &50    & 100    & 10   & 50    & 100 &   10    &  50     & 100  \\ \Xhline{1pt}
			\multicolumn{2}{l}{\rule{0pt}{10pt}$\mathsf{PoWVote}$}      &  29.62    &   28.90   &  30.12      & 29.77      &   31.23   &  29.35    &  29.00   &   30.27   &  32.14   &  31.36     &   30.80    & 29.79 \\ \Xhline{1pt}
			\multicolumn{2}{l}{\rule{0pt}{10pt}$\mathsf{IPSharding}$ }    &    0.65   &   0.65    &     0.65   &   0.93    &    0.93   &   0.93     &   1.46    &   1.46     & 1.46  &  1.88    &  1.88      &   1.88    \\ \Xhline{1pt}
			\multicolumn{2}{l}{\rule{0pt}{10pt}Key Exchange}   &  0.60     &   0.83    &   1.07   &   0.69    &  1.03     &   1.47    &   0.77  &   1.26    &  1.86  &    0.86 &  1.48    & 2.26\\ \Xhline{1pt}
			\multicolumn{2}{l}{\rule{0pt}{10pt}$\mathsf{Precomputation}$} &   0.33    &   0.33   &   0.33   &    0.47   &  0.47    &   0.47    &   0.62    &    0.62   &  0.62 &     0.81  &   0.81    &  0.81   \\ \Xhline{1pt}
			\multicolumn{2}{l}{\rule{0pt}{10pt}$\mathsf{RealTime}$ }     &  0.03  &    0.11  &  0.26   &    0.06      &   0.19     &   0.37        &   0.12   &   0.23  & 0.43   &    0.16   &   0.29    &   0.49 \\ \Xhline{1pt}
		\end{tabular}
		\begin{tablenotes}
			\item[]We fix the difficulty $D=1\times10^{11}$ and stipulate the block generation time $T=30s$. We can change the division strategy in Figure \ref{fig6} to control the number of mix nodes. 
		\end{tablenotes}
	\end{threeparttable}
\end{table*}
\begin{figure}[]
	\centering
	\includegraphics[height=6cm, width=8.5cm]{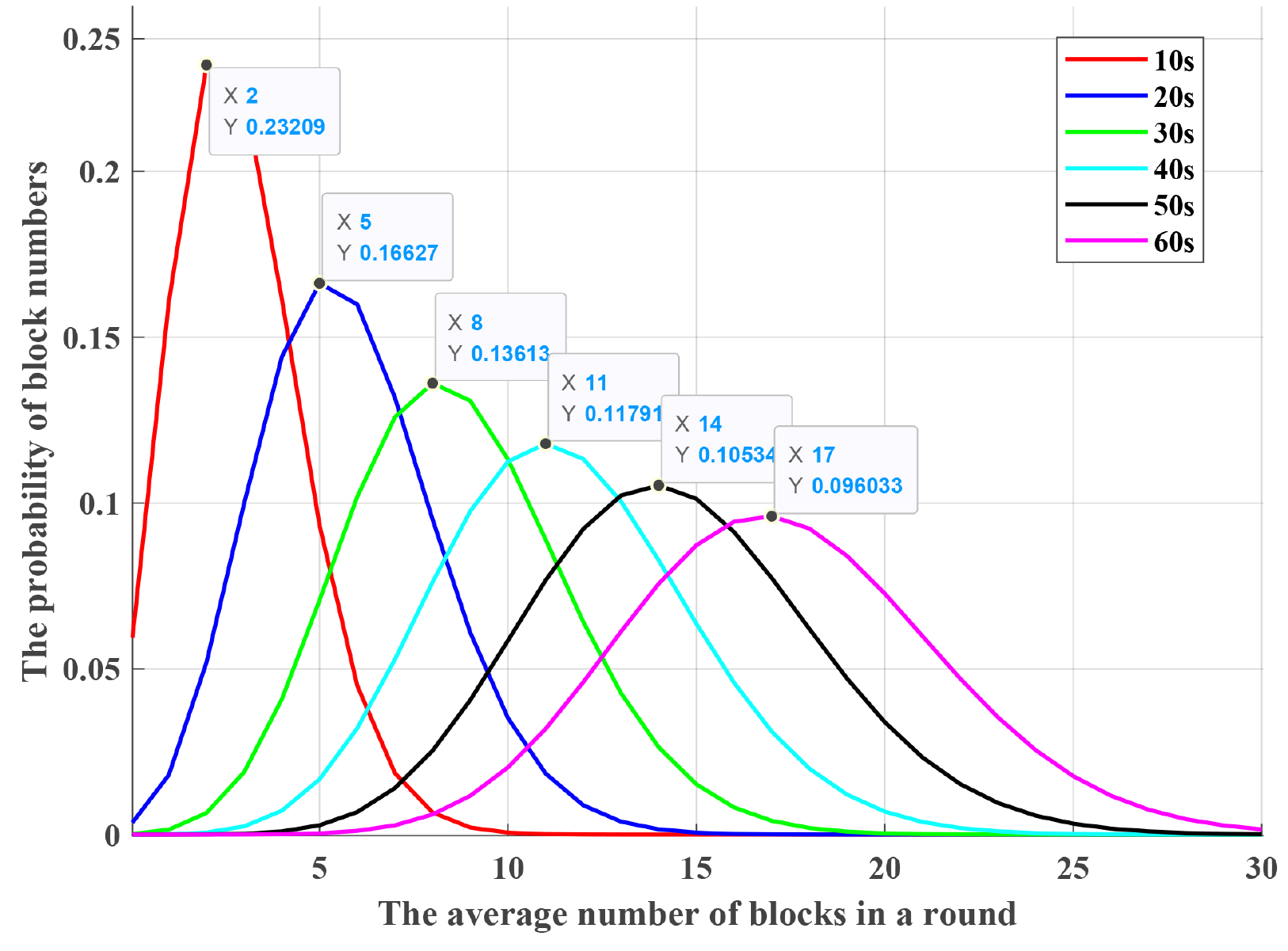}
	\caption{The probability of the expected number of candidate miners}
	\label{fig5}
\end{figure}
\subsection{The Number of candidate miners}
In Bitcoin, the difficulty value $D$, the target $target$ and the network hash rate satisfy the following formula:
$$D=\frac{D_{Max}}{target_{current}},  \quad Hashrate_{min}=\frac{D\cdot 2^{32}}{T},$$\\
where $D_{Max}$ is a large constant, $target_{current}$ denotes the current target and $Hashrate_{min}$ represents the minimum hash rate required to calculate a block of difficulty $D$ within time $T$. According to literature \cite{}, the process of generating $n$ blocks within the average time $t$ will be a Poission distribution with the expected value $\lambda$,
 $$P(X\geq M)=\sum_{k=M}^{\infty}\frac{\lambda^k}{k!}e^{-\lambda}.$$ 
 
Since the Bitcoin network with $Hashrate_{min}$ mining power generates one block within average time $T$, we can conclude that \\
$$\lambda=\frac{Hashrate_{real}}{Hashrate_{min}}=\frac{Hashrate_{real}\cdot T}{D\cdot 2^{32}}.$$
Here $Hashrate_{real}$ denotes the mining power of the Bitcoin network in the real world. To better simulate BCMIX in the real world scenario, we leverage the mining power distribution from \footnote{https://btc.com/stats/pool?pool\_mode=year}. In Appendix B, Table IV details the IP address and mining power of different mining pools. With $D=1\times10^{11}$, we estimate the probability of simultaneous block generation under different $T$ and report our results in Figure \ref{fig5}. We can see that when fixing the difficulty, the longer the average time to generate a block, the higher the probability of generating multiple blocks simultaneously.

\begin{figure}[]
	\centering
	\includegraphics[height=6cm, width=7cm]{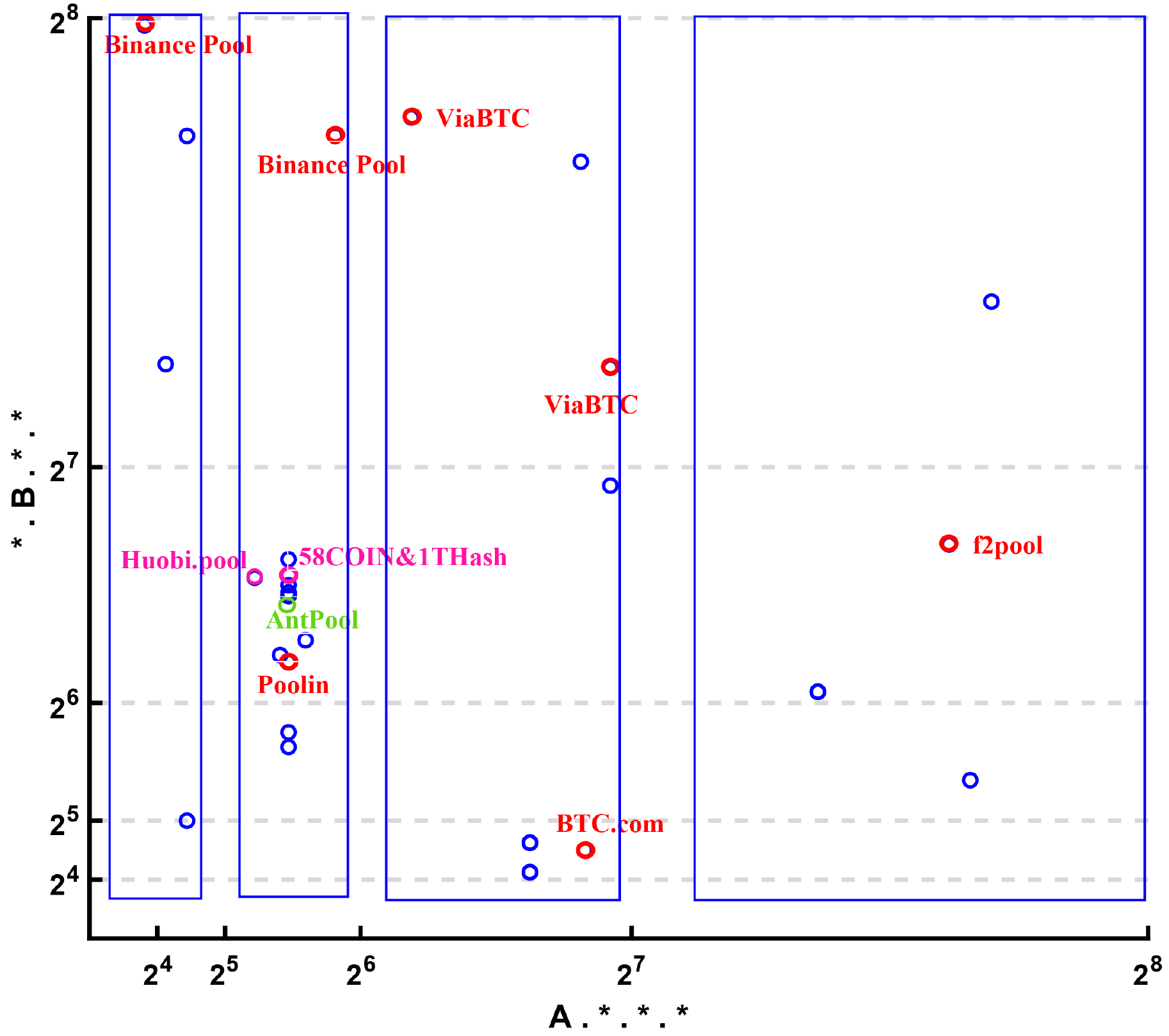}
	\caption{The results of $\mathsf{IPSharding}$ algorithm when $\sigma=5$ (four node poos, i.e. four mix nodes). We denote an IP address as four segments, the $x$ axis is the first segment and the $y$ axis denotes the second segment. We can change the division strategy to control the number of mix nodes. We respectively simulate the attackers which hold 53.84\% (the red circle), 63.43\% (the red circles and the green circle) and 74.80\% (the red circles, the green circle and the purple circles) of total computing powers.}
	\label{fig6}
\end{figure}
\subsection{Implementation}
In order to show the feasibility of BCMIX, we build a Bitcoin network in a desktop computer (equipped with a Ubuntu 16.04 LTS, Intel (R) Core (TM) i5-8500 CPU of 3.00GHz and 16GB RAM). We make use of two Github programs, Bitcoin-Simulator \footnote{https://github.com/arthurgervais/Bitcoin-Simulator} and cMix \footnote{https://github.com/byronknoll/cmix}, to evaluate the proposed BCMIX. We set up the public parameters by utilizing the publicly available library for cryptography on the $\mathsf{secp256k1}$ curve \footnote{https://github.com/bitcoin-core/secp256k1}.

We apply the Algorithm 2 to Table IV and obtain the distribution of mining pools in the real world. The results are shown in Figure 6. Combining Figure \ref{fig5} and Figure \ref{fig6}, we can find that when fixing difficulty $D$ to $1.0\times 10^{11}$, setting the average block generation time to 20s or 30s can obtain a reasonable number of mix nodes, which is more conducive to the implementation of BCMIX. To measure the approximate time cost, we test the functionality of $\mathsf{PoWVote}$, $\mathsf{IPSharding}$, Key Exchange, $\mathsf{Precomputation}$ and $\mathsf{RealTime}$ under different number of mix nodes and users for 100 times. The test results are shown in Table II. We can see that $\mathsf{PoWVote}$ spends more time than other algorithms. In order to improve the operating efficiency of BCMIX, we elect mix nodes at regular intervals.
\begin{table*}[]
	\centering
	\begin{threeparttable}
		\caption{Summary comparison of attack resilience and performance}
		\begin{tabular}{llccccccc}
			\Xhline{1pt}
			\multirow{2}{*}{} & \multirow{2}{*}{\textbf{Technology}} & \multicolumn{5}{c}{\textbf{Attacks}}                     & \multicolumn{2}{c}{\textbf{Performance}} \\ \Xcline{3-9}{1pt}\rule{0pt}{10pt}
			&                             & \textbf{Traffic analysis} & \textbf{DDos} & \textbf{Tagging} & \textbf{MitM} & \textbf{Sybil} &\textbf{Bandwidth}        & \textbf{Latency}       \\\Xhline{1pt}
			BCMIX             & Blockchain Re-encryption                 &      \Checkmark             &   \Checkmark    &   \Checkmark       &   \Checkmark    &  \Checkmark      &       \Checkmark            &     \Checkmark           \\\Xhline{1pt}
			cMix              & Mix Re-encryption           &       \Checkmark          &   \XSolidBrush   &    \XSolidBrush     &   \XSolidBrush   &  $\mathbf{\setminus}$  &            \Checkmark      &      \Checkmark         \\ \Xhline{1pt}
			Tor               & Mix Mutilayer Encryption   &     \XSolidBrush             &   \XSolidBrush   &  $\setminus$      &   \XSolidBrush   &   $\setminus$    &          \Checkmark         &   \Checkmark             \\ \Xhline{1pt}
			BAR               & Multicast/Broadcast         &         \Checkmark         &    \XSolidBrush   &    \XSolidBrush      &   \XSolidBrush    &   $\setminus$    &       \Checkmark            &       \XSolidBrush          \\ \Xhline{1pt}
			AnonPubSub        & Probabilistic Forwarding    &   \Checkmark                &    \XSolidBrush  &     \Checkmark    & \XSolidBrush     &     $\mathbf{\setminus}$   &        $\setminus$           &  $\setminus$              \\ \Xhline{1pt}
			Tarzan            & Peer-to-peer                &       \Checkmark            &   \Checkmark    & \Checkmark        &     \XSolidBrush  &    \XSolidBrush    &     \Checkmark              &    \Checkmark            \\ \Xhline{1pt}
			DiceMix           & CoinJoin                    &      \XSolidBrush             &    \XSolidBrush   &    \Checkmark     &    \XSolidBrush   &   \XSolidBrush     &    \Checkmark              &      \Checkmark         \\ \Xhline{1pt}
		\end{tabular}
		\begin{tablenotes}
			\item[]\Checkmark - The system is secure against this attack or the system performs well.
			\item[] \XSolidBrush - The system is vulnerable to this attack or the system is not performing well. 
			\item[] $\setminus$ - The system does not involve this attack of the authors did not mention the related performance.
		\end{tablenotes}
	\end{threeparttable}
\end{table*}
\section{Security Analysis}
According to the proposed additive homomorphism mix-net protocol and basic components of proof-of-work blockchains, BCMIX system can satisfy all the security requirements described in Section \uppercase\expandafter{\romannumeral3}-C. Table III summarizes a comparison among different anonymous systems.
\begin{figure}[]
	\centering
	\includegraphics[height=4.5cm, width=8.5cm]{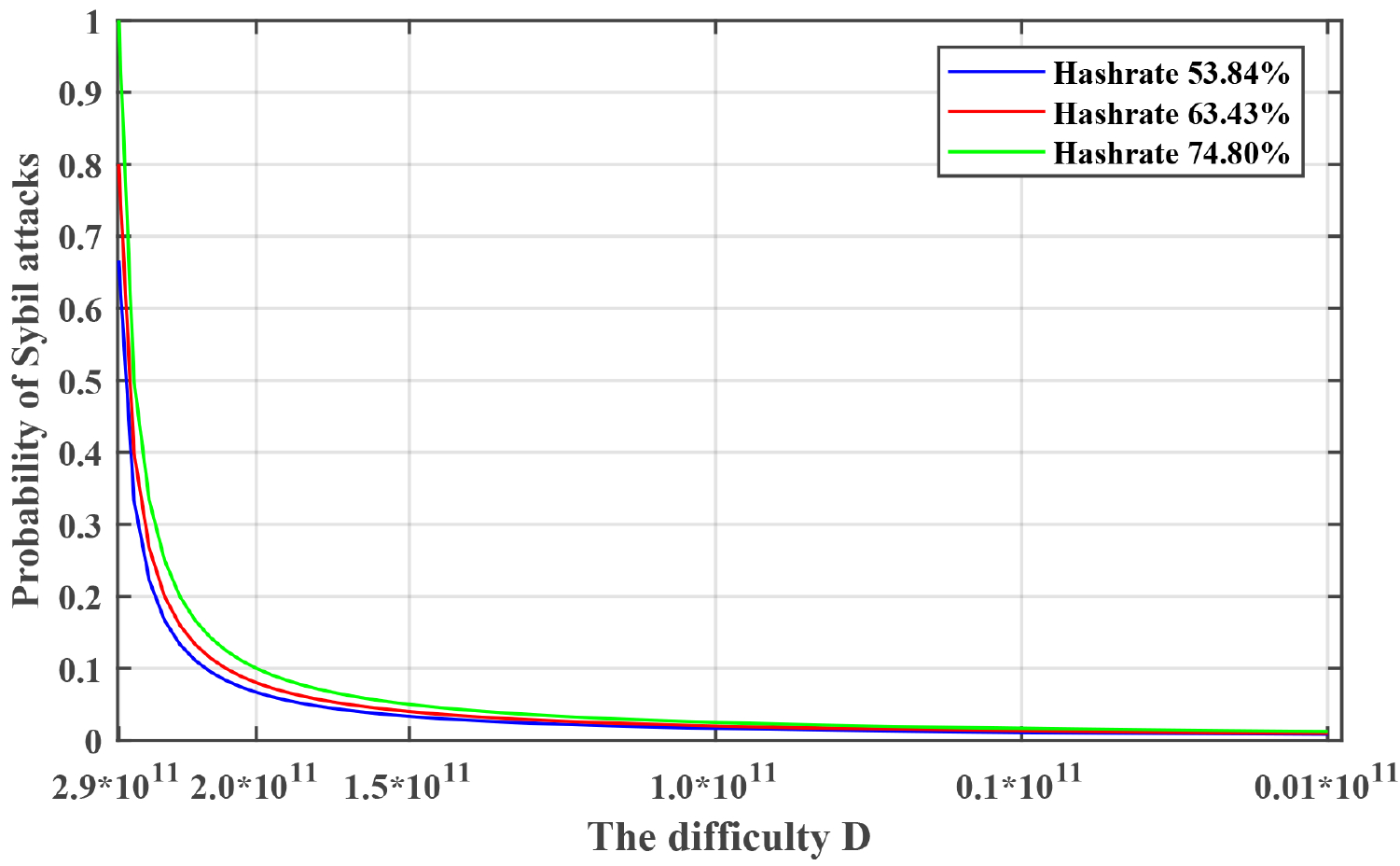}
	\caption{The relationship between difficulty and probability to launch the Sybil attacks.}
		\label{fig7}
\end{figure}
\begin{itemize}
	\item {\textbf{Resistance to Sybil Attacks.}} We indicate that an attacker implement Sybil attacks successfully means the attacker take control of all mix nodes. To simulate a process of Sybil attacks, we assume that an IP address of mining pools in Table IV represents an identity, and all identities created by a mining pool share the mining pool's computing power equally. Figure \ref{fig7} shows the requirements in terms of difficulty (computing power), in order for an attacker with different computing power to successfully launch Sybil attacks. Figure \ref{fig7} indicates that, when the difficulty $D$ is fixed, an attacker with higher computing power have a higher probability of successfully launching Sybil attacks than attackers with lower computing power. And if we set the difficulty $D$ to be very small, the probability that an attacker successfully launching Sybil attacks is almost zero.	
	   \begin{figure}[]
		\centering
		\includegraphics[height=4.5cm, width=8.5cm]{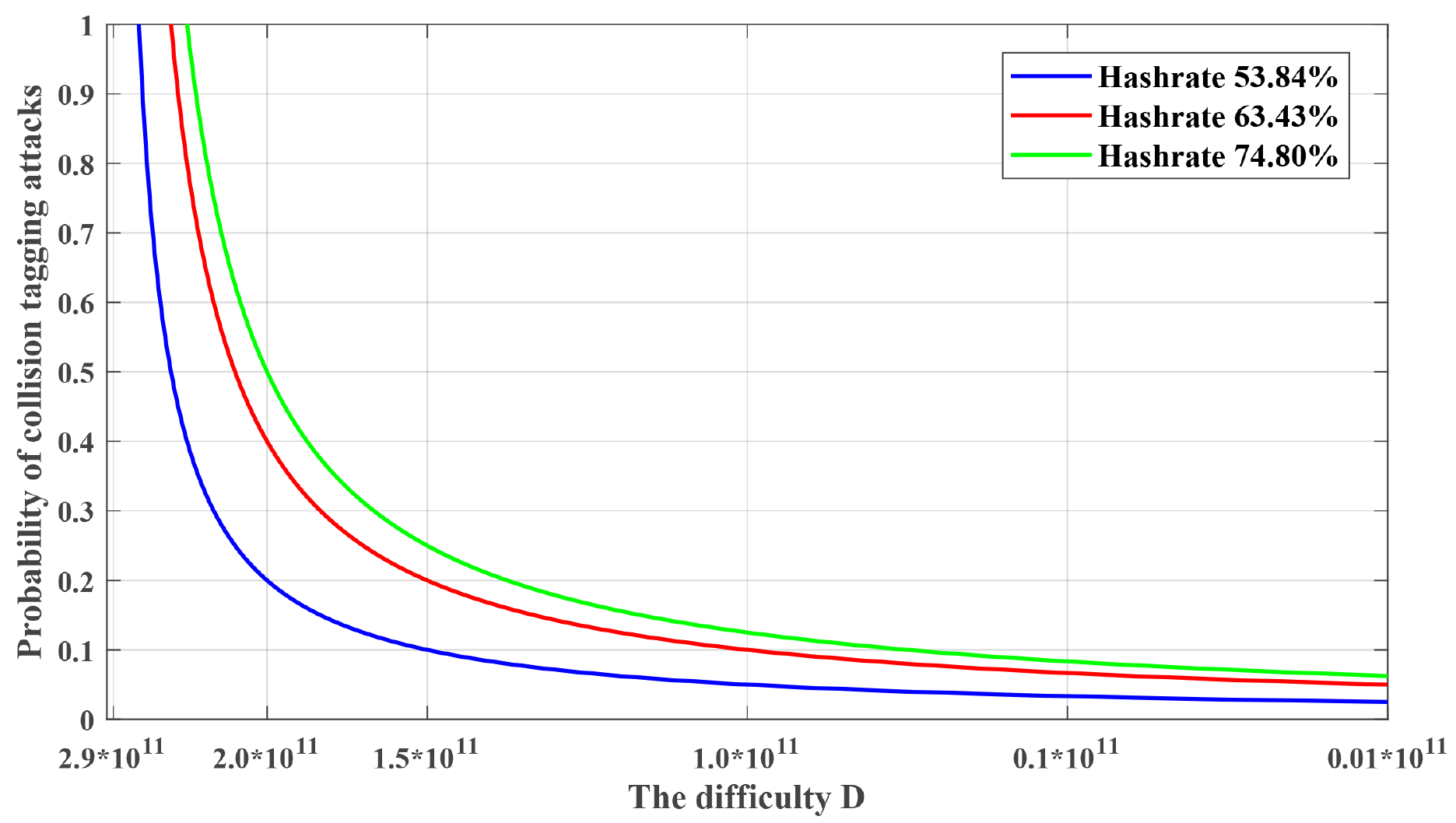}
		\caption{The relationship between difficulty and probability to launch the collision tagging attacks.}
		\label{fig8}
	\end{figure}
	\item {\textbf{Resistance to Collision Tagging Attacks.}} As we mention in Section II C, an attacker compromise the last mix node $N_l$ and any mix node $N_i$ to launch collision tagging attacks. In this case, we can treat a collision tagging attack as a special form of a Sybil attack. We illustrates the relationship between the difficulty $D$ and the probability of launching collision tagging attacks in Figure \ref{fig8}. Similar to Sybil attacks, we can set the difficulty $D$ to be very small to resist the collision tagging attacks.
	\item {\textbf{Sender Anonymity.}} Suppose that an adversary $\mathcal{A}$ can compromise $\beta-2$ users and $n-1$ nodes. Let $U_x$ and $U_y$ denote any two honest users and let $N_i$ be the only honest mix node. The sender anonymity property guarantees that the adversary cannot distinguish the messages from the two honest users. We design the following experiments for a $\mathcal{PPT}$ adversary $\mathcal{A}$.\\\\
	$\boldsymbol{EXP_{an}^b(\Delta,\mathcal{A},\lambda)}:$\\
	$\mathcal{A}^\textbf{Setup}$:
	\begin{itemize}
	\item []$({\boldsymbol{Q}},K_x,K_y,K_{m,m\neq x,m\neq y}^{\mathcal{A}})\leftarrow \mathsf{Setup}(1^\lambda),$
    \end{itemize}
	$\mathcal{A}^\textbf{Precomputation}$:
	\begin{itemize}
    \item []	$(\mathcal{E}_{\boldsymbol{Q}}(\boldsymbol{R}_\mathcal{A})\cdot \mathcal{E}_{\boldsymbol{Q}}(\boldsymbol{r}_{N_i}))\leftarrow \mathsf{Preprocess}(\boldsymbol{r}),$
    \item [] $\mathcal{E}_{\boldsymbol{Q}}(\boldsymbol{\Pi}_n(\boldsymbol{R}_n)\cdot\boldsymbol{S}_n)\leftarrow\mathsf{Mix}(\boldsymbol{s},\pi),$
	\item []$(\boldsymbol{\Pi}_n(\boldsymbol{R}_n)\cdot\boldsymbol{S}_n)\leftarrow\mathsf{Postprocess}(D_{d_i}),$
	\end{itemize}
	$\mathcal{A}^\textbf{RealTime}$:\\
	$(m_b\cdot \boldsymbol{r}_{N_i},m_{1-b}\cdot \boldsymbol{r}_{N_i})\leftarrow\mathsf{Preprocess}(K_x\cdot m_b,K_y\cdot m_{1-b}), $
	\begin{itemize}
	\item []$\pi_{N_i}(m_b\cdot \boldsymbol{r}_{N_i}\cdot \boldsymbol{s}_{N_i},m_{1-b}\cdot \boldsymbol{r}_{N_i}\cdot \boldsymbol{s}_{N_i})\leftarrow \mathsf{Mix},$
    \item []\quad\quad\quad$(m_b^{'},m_{1-b}^{'})\leftarrow \mathsf{Postprocess}$
    \end{itemize}

   The adversary's advantage in the experiments is:
   $$|Pr[EXP_{an}^0(\Delta,\mathcal{A},\lambda)=1]-Pr[EXP_{an}^1(\Delta,\mathcal{A},\lambda)=1]|.$$

   \begin{definition}(Anonymity). 
	An additive homomorphism mix-net protocol $\Delta:=(\mathsf{Setup},\mathsf{Precom},\mathsf{RealTime})$ maintains anonymity if the advantage of the adversary in the anonymity game is negligible.
   \end{definition}
  \begin{theorem}
   If $\mathcal{E}$ is a ECDLP-secure additive homomorphism encryption scheme, ECDH is a ECDHP-secure key exchange protocol and a non-interactive commitment scheme $COM$ is perfectly-hiding, then BCMIX satisfies anonymity defined in Definition 7.
   \end{theorem}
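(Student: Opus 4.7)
The plan is to prove Theorem 4 via a standard game-hopping argument. I would construct a sequence of hybrid experiments interpolating between $\mathbf{EXP}_{an}^0$ and $\mathbf{EXP}_{an}^1$, where each hop modifies $\mathcal{A}$'s view in a manner whose indistinguishability reduces to exactly one of the three assumptions in the hypothesis: ECDHP-security of the key exchange, perfect hiding of $\mathsf{COM}$, and ECDLP-based semantic security of the additive-homomorphic encryption $\mathcal{E}$. The overall goal is to reach a hybrid in which $\mathcal{A}$'s view is provably independent of the bit $b$.

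First I would address the setup phase. The partial keys $k_{N_i,x}$ and $k_{N_i,y}$ that the honest mix node $N_i$ shares with the two honest users $U_x,U_y$ are derived via ECDH, and $\mathcal{A}$ sees only the corresponding public keys. By ECDHP-security, each can be replaced by a uniformly random group element, so after this hop the contributions of $N_i$ to $K_x$ and $K_y$ are information-theoretically uniform from $\mathcal{A}$'s perspective, and therefore so are $K_x$ and $K_y$ themselves.

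Second, I would replace every commitment that $N_i$ broadcasts during precomputation, namely $COM_r,COM_s,COM_\pi,COM_{\mathcal{D}_{d_i}}$, by commitments to fixed dummy values; by perfect hiding of $\mathsf{COM}$ this hop changes the view distribution not at all. Third, I would replace $N_i$'s ciphertext contributions, i.e.\ $\mathcal{E}_{\boldsymbol{Q}}(\boldsymbol{r}_{N_i})$ together with its multiplicative update in the mix step, by encryptions of independent fresh random group elements; the distinguishing gap is bounded by the CPA-security of $\mathcal{E}$, which the paper reduces to ECDLP. In the resulting final hybrid, the only parts of $\mathcal{A}$'s view that could still depend on $b$ are the blinded inputs $m_b\cdot K_x$ and $m_{1-b}\cdot K_y$ and the permuted output $\boldsymbol{\Pi}_n(\boldsymbol{M})$. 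Because $K_x,K_y$ are now uniform and independent, $m_b\cdot K_x$ and $m_{1-b}\cdot K_y$ are uniform group elements irrespective of $b$. The published output reveals $\{m_0,m_1\}$ as a multiset, but its internal order is governed by $\pi_{N_i}$ composed with the adversarial permutations; since $\pi_{N_i}$ is uniformly random and never leaked in any form, this composition is a uniformly random permutation in $\mathcal{A}$'s view. Hence the view is statistically independent of $b$, yielding a negligible advantage.

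The main obstacle I anticipate is the third hop. Because $N_i$'s ciphertext is not transmitted in isolation but is multiplied with ciphertexts produced by adversarial nodes (whose randomness $\mathcal{A}$ controls) and later decrypted jointly using shares from every mix node, the reduction to CPA-security of $\mathcal{E}$ must embed a single external challenge ciphertext into the mixnet computation while faithfully simulating every other honest operation, including the honest decryption share $\mathcal{D}_{d_i}$, in a manner consistent with what $\mathcal{A}$ would observe in a real execution. The additive-homomorphic and threshold structure of EC-ElGamal makes such an embedding possible, but the bookkeeping across $\mathsf{Preprocess}$, $\mathsf{Mix}$, and $\mathsf{Postprocess}$ — and in particular the consistency between committed values, revealed decryption shares, and the ciphertext replacement — is the most delicate part of the argument and is where most of the technical work would lie.
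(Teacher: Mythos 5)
Your proposal is correct in substance but takes a genuinely different route from the paper. The paper's proof is a direct reduction sketch: it enumerates what the adversary controls in each phase (everything except $K_x$, $K_y$, $\boldsymbol{r}_{N_i}$, $\boldsymbol{s}_{N_i}$, $\pi_{N_i}$, $\mathcal{E}(\boldsymbol{r}_{N_i})$ and $D_{N_i}$), and then asserts that an adversary who links $(K_x\cdot m_0, K_y\cdot m_1)$ to $(m_b',m_{1-b}')$ must be able to compute $K_x,K_y$, hence breaks ECDHP, with an analogous one-line claim for ECDLP; the reduction itself is never constructed. Your hybrid decomposition is more structured and, notably, actually uses all three hypotheses of the theorem: the paper's sketch never invokes the perfectly-hiding property of $COM$ at all, and never explains why the permuted output does not leak $b$, whereas you handle both (dummy commitments via perfect hiding; the composition with the uniformly random $\pi_{N_i}$ being uniform). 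You also correctly identify the delicate point the paper glosses over, namely embedding an external challenge ciphertext into a jointly computed, jointly decrypted mixnet value while keeping the honest decryption share and the commitments consistent. One caveat you inherit from the paper rather than introduce: replacing an ECDH shared secret by a uniform group element, and arguing CPA security of EC-ElGamal, both really require a \emph{decisional} assumption (DDH over the curve), not merely the computational ECDHP/ECDLP assumptions named in the theorem statement; if you carry out the hybrids rigorously you will need to either strengthen the hypotheses accordingly or route the key-replacement step through a hash-of-shared-secret modelled as a random oracle.
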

 \begin{proof}(Sketch). We prove the security of BCMIX by reduction from the security of the encryption system $\mathcal{E}$. Without loss of generality, we assume that the adversary $\mathcal{A}$ can compromise $\beta-2$ users and $n-1$ nodes. Let $U_x$ and $U_y$ denote any two honest users and let $N_i$ be the only honest mix node. 
		
	In the setup phase, $\mathcal{A}$ can control all the shared keys except $K_x$, $K_y$. In the precomputation phase, $\mathcal{A}$ gets command of random values $\boldsymbol{R}_{\mathcal{A}},\boldsymbol{S}_{\mathcal{A}}$ and random permutations $\boldsymbol{\Pi}_{\mathcal{A}}$ except $\boldsymbol{r}_{N_i},\boldsymbol{s}_{N_i},\pi_{N_i}$, the corresponding ciphertext $\mathcal{E}(\boldsymbol{r}_{N_i})$ and the decrypt share $D_{N_i}$. In the real time phase, users send blind messages to mix networks in the form of $\boldsymbol{M}\times\boldsymbol{K}$. The adversary $\mathcal{A}$ can parse the blind messages as $(K_x\cdot m_b,K_y\cdot m_{1-b})$. During the mixing process of real time phase, $\mathcal{A}$ decrypt the mixed messages and obtain the mixed messages $(m_b\cdot \boldsymbol{r}_{N_i}\cdot \boldsymbol{s}_{N_i},m_{1-b}\cdot \boldsymbol{r}_{N_i}\cdot \boldsymbol{s}_{N_i})$. Finally, $\mathcal{A}$ observe the plaintext messages of the form $(m_b^{'},m_{1-b}^{'})$.
	
	We assume a challenger $\mathcal{C}$ assigns the blind messages $(K_x\cdot m_0,K_y\cdot m_1)$, where $m_0=m_b$ and $m_1=m_{1-b}$ as determined by a random bit $b$. An adversary holding $(m_b^{'},m_{1-b}^{'})$ predicts the association between $(K_x\cdot m_0,K_y\cdot m_1)$ and $(m_b^{'},m_{1-b}^{'})$. In the end, if the anonymity game adversary $\mathcal{A}$ predicts the bit $b$ correctly, we can infer that $\mathcal{A}$ can calculate $K_x$ and $K_y$ which break the ECDHP-secure of the underlying key exchange system. Thus $|Pr[EXP_{an}^0(\Delta,\mathcal{A},\lambda)=1]-Pr[EXP_{an}^1(\Delta,\mathcal{A},\lambda)=1]|\leq negl_{ECDHP}$.  
	Similarly, we conclude that $|Pr[EXP_{an}^0(\Delta,\mathcal{A},\lambda)=1]-Pr[EXP_{an}^1(\Delta,\mathcal{A},\lambda)=1]|\leq negl_{ECDLP}$. This concludes the proof. 
	\end{proof}
	\item {\textbf{Resistance to MitM attacks.}} In BCMIX we leverage the gossip protocol to spread messages. We assume that an attack cannot control all access networks of a blockchain node. According to \cite{heilman2015eclipse}, this is reasonable since none instances of eclipse attacks have arisen in reality up to now. Besides, many blockchain communities have fixed this vulnerability \cite{marcus2018low}. In this case, an adversary $\widetilde{A}$ attempts to simulate an elected mix node to deceive a user $\widetilde{U}$. At the same time, other nodes connected with $\widetilde{U}$ inform $\widetilde{U}$ the latest elected mix node sets, thus $\widetilde{U}$ identities $\widetilde{A}$ as an adversary. 
	
 	\item {\textbf{Single point of failure Resilience.}} Nodes in BCMIX are networking dynamically. Once an elected mix node $N_i$ loses the response for a period of time, the next mix node $N_{i+1}$ would inform other nodes that $i$ is crashed. Then the other nodes validate the situation of node $i$, and begin a new networking process if $i$ goes down, or identify node $N_{i+1}$ as a malicious node if $i$ works normally.	
	\item {\textbf{Resistance to Other Attacks.}} BCMIX can also resist the following attacks. 
	\begin{itemize}
		\item [a)] Replay attacks. An attacker may retransmitting a message $m$ form a previous session. Then the attacker compare the mixed message sets with the previous message sets which contain $m$, thus the attacker can associate the egress messages with the ingress messages. Since the random values and permutations are never reused, thus BCMIX resists replay attacks.
		\item [b)] Traffic analysis attacks. In connection-based systems such as Tor, attackers can distinguish between two different paths in the free mix network by counting packages and timing communication. Since BCMIX is a message-based system which batches and permutes messages during the transmitting process, attackers can not distinguish and analyze the blind messages, thus BCMIX resists traffic analysis attacks.
		\item [c)] Intersection attacks and statistical disclosure attacks. These attacks utilize information given by observing mix networks where the users can freely choose the mix node for their messages. Since BCMIX adopts a fixed cascade of mix nodes every round, thus BCMIX is not susceptible to these attacks.
	\end{itemize}
\end{itemize}
\section{Conclusion}
In this paper, we achieved a dynamic self-organizing mix anonymous system. With blockchain technology, we elect mix nodes from public, dynamic blockchain miners. Before constructing BCMIX, we proposed BCMN protocol with the formal security models. Building on the proposed protocol, we designed a transaction-based key exchange scheme and proposed our BCMIX. Then we illustrated BCMIX can satisfy the relevant security requirements. After that we demonstrated experimentally that BCMIX is resistant to the attacks proposed in this paper.
Finally, we evaluated the performance of the prototype with the real world data and compared BCMIX with some latest anonymous systems. The results suggested that BCMIX is practical for real world deployment.

A follow-on work is to find a solution for recipient anonymity, which would improve the anonymity ability of our system. We believe that 
building a bidirectional anonymous system allow us to identity additional features and properties.

\section{Appendix A.}
The detailed description of cMix protocol is as follows.

\textbf{Setup phase.} The mix nodes establish their decryption share $X_i$, and the public key $y$ is computed. Each user $A_j$ will individually establish a symmetric key $k_{i,j}$ with each mix node $N_i$ in the network. The mix nodes draw their random values $\vec{r}_i$ and $vec{t}_i$ for the $n$ slots.

\textbf{Precomputation phase.} The goal in this phase is to perform the public-key operations that is needed in the real time phase. 

\textit{Step 1-Preprocessing}: Mixnode $N_i$ computes $\mathcal{E}(\vec{r}_i^{-1})$, and send their calculated vector to the network handler. The network handler then computes $\mathcal{E}(\vec{R}_h^{-1})=\prod_{i=1}^{h}\mathcal{E}(\vec{r}_i^{-1})$.

\textit{Step 2-Mixing}: $N_i(i=1,\cdots,h-1)$ computes and sends the following to $N_{i+1}$：
\begin{scriptsize}$$\mathcal{E}(\Pi_i(\vec{R}_h^{-1})\times \vec{T}_i^{-1})=\begin{cases}
\pi_1(\mathcal{E}(\vec{R}_h^{-1}))\times \mathcal{E}(\vec{t}_1^{-1})& i=1\\
\pi_i(\mathcal{E}(\Pi_{i-1}(\vec{R}_h^{-1})\times \vec{T}_{i-1}^{-1}))\times \mathcal{E}(\vec{t}_i^{-1}) & 1<i\leq h 
\end{cases}$$
\end{scriptsize}
$N_h$ finally computes: $(\vec{C}_1,\vec{C}_2)=\mathcal{E}(((\vec{R}_h)\times\vec{T}_h)^{-1})=\pi_h(\mathcal{E}((\vec{R}_h^{-1})\times \vec{T}_{h-1}^{-1}))\times \mathcal{E}(\vec{t}_h^{-1})$. $N_h$ sends $\vec{C}_1$ to the other mix nodes and store $\vec{C}_2$ locally for use in the real time phase.

\textit{Step 3-Postprocessing}: Mixnode $N_i$ use their decryption share $X_i$ to decrypt the vector of random components they received in the previous step; $\mathcal{D}_i(\vec{C}_1)=\vec{C}_1^{-X_i}$. They publish a commitment to their calculated decryption share.

\textbf{Real time phase.} In this phase, the senders are involved. $A_j$ constructs a blinded message $m_j\times K_j^{-1}$. The blnded messages are the input to the protocol, and they are combinded by the network handler to yield the vector $\vec{m}\times \vec{K}^{-1}$.

\textit{Step 1-Preprocessing}: Every mix node $N_i$ calculates $\vec{k}_i\times \vec{r}_i$, and sends the resulting vector to the network handler. The network handler then computes $\vec{m}\times\vec{R}_h=\vec{m}\times \vec{K}^{-1}\times \prod_{i=1}^{h}\vec{k}_i\times \vec{r}^i$, hence the $\vec{K}^{-1}$ vector is replaced with the random $r$ values of each mix node.

\textit{Step 2-Mixing}: $N_i$ computes and sends the following to $N_{i+1}$:
\begin{scriptsize}
$$(\Pi_i(\vec{m}\times\vec{R}_h)\times \vec{T}_i)=\begin{cases}
\pi_1(\vec{m}\times\vec{R}_h)\times\vec{t}_1 & i=1\\
\pi_i(\Pi_{i-1}(\vec{m}\times\vec{R}_h)\times\vec{T}_{i-1})\times\vec{t}_i & 1<i<h
\end{cases}$$
\end{scriptsize}
Mix node $N_h$ computes $\Pi_h(\vec{m}\times\vec{R}_h)\times\vec{T}_h=\pi_h(\Pi_{h-1}(\vec{m}\times\vec{R}_h)\times\vec{T}_{H-1})\times\vec{t}_h$. $N_h$ commits to this vector and sends the commitment to the remaining mix nodes.

\textit{Step 3-Postprocessing}: When mix node $N_i(i=1,\cdots,h-1)$ receive the commitment from $N_h$, they send their decryption share $\mathcal{D}_i(\vec{C}_1)$ computed in the precomputation phase to the network handler. The last mix node computes and send the following to the network handler: $\Pi_h(\vec{m}\times\vec{R}_h)\times\vec{T}_h\times\vec{C}_2\times\prod_{i=1}^{h}\mathcal{D}(\vec{C}_1)=\Pi_h(\vec{m}\times\vec{R}_h)\times\vec{T}_h\times(\Pi_h(\vec{R}_h)\times\vec{T}_h)^{-1}=\Pi_h(\vec{m})$

The network handler outputs $\Pi_h(\vec{m})$, that is a permutation of the input message.
\section{Appendix B.}
The distribution of mining pools in the real world. To simulate a process of Sybil attacks, we assume that an IP address of mining pools in Table IV represents an identity, and all identities created by a mining pool share the mining pool's computing power equally. 
\begin{table}[htbp]
	\centering\caption{Notations of The Revised Mix-net Protocol}
	\scriptsize
	\label{tab2}
	\begin{tabular}{cccc}
		\Xhline{0.7pt}
		\textbf{Pool Name}      & \textbf{IP Address}                                                                                              & \textbf{Hashrate} & \textbf{Proportion} \\ 	\Xhline{0.7pt}
		&                                                                                                         & 124(EH/s)            & 100\%       \\ 	\Xhline{0.7pt}
		f2pool         & 203.107.32.162                                                                                          & 21.1048        & 17.02\%     \\ 	\Xhline{0.7pt}
		Poolin         & 47.75.234.12                                                                                            & 19.716         & 15.9\%      \\ 	\Xhline{0.7pt}
		BTC.com        & \begin{tabular}[c]{@{}c@{}}117.24.1.243, 117.24.1.239\\ 117.24.1.238, 117.24.1.242\end{tabular} & 16.1076        & 12.99\%     \\ 	\Xhline{0.7pt}
		AntPool        & 47.94.135.145                                                                                           & 13.95          & 11.25\%     \\ 	\Xhline{0.7pt}
		ViaBTC         & 116.211.155.211, 123.155.158.10                                                                         & 7.8988         & 6.37\%      \\ 	\Xhline{0.7pt}
		Huobi.pool     & 47.93.94.105                                                                                            & 7.626          & 6.15\%      \\ 	\Xhline{0.7pt}
		58COIN\&1THash & 39.98.72.224                                                                                            & 6.4728         & 5.22\%      \\ 	\Xhline{0.7pt}
		SlushPool      & \begin{tabular}[c]{@{}c@{}}104.26.5.102, 104.26.4.102\\ 172.67.74.105\end{tabular}                  & 5.6792         & 4.58\%      \\ 	\Xhline{0.7pt}
		OKExPool       & 208.43.170.231                                                                                          & 4.9724         & 4.01\%      \\ 	\Xhline{0.7pt}
		unknown        & 47.93.94.105                                                                                            & 4.7244         & 3.81\%      \\ 	\Xhline{0.7pt}
		BTC.TOP        & 123.56.208.222                                                                                          & 3.9804         & 3.21\%      \\ 	\Xhline{0.7pt}
		BytePool       & 58.218.215.133                                                                                          & 2.2816         & 1.84\%      \\ 	\Xhline{0.7pt}
		Binance Pool   & 13.248.150.68, 76.223.2.151                                                                          & 2.0832         & 1.68\%      \\ 	\Xhline{0.7pt}
		BitFury        & \begin{tabular}[c]{@{}c@{}}104.26.5.32, 104.26.4.32\\ 172.67.70.128\end{tabular}                   & 2.0336         & 1.64\%      \\ 	\Xhline{0.7pt}
		Lubian.com     & 47.56.109.242                                                                                           & 1.922          & 1.55\%      \\ 	\Xhline{0.7pt}
		NovaBlock      & \begin{tabular}[c]{@{}c@{}}104.26.11.113, 104.26.10.113\\ 172.67.70.128\end{tabular}                 & 1.488          & 1.20\%      \\ 	\Xhline{0.7pt}
		SpiderPool     & 47.52.126.9                                                                                             & 0.62           & 0.5\%       \\ 	\Xhline{0.7pt}
		WAYI.CN        & 47.103.164.189                                                                                          & 0.5459         & 0.44\%      \\ 	\Xhline{0.7pt}
		Bitcoin.com    & 104.18.26.217, 104.18.27.217                                                                         & 0.4092         & 0.33\%      \\ 	\Xhline{0.7pt}
		MiningCity     & 23.218.94.192, 23.32.241.177                                                                         & 0.124          & 0.1\%       \\ 	\Xhline{0.7pt}
		OKKONG         & 47.96.193.193                                                                                           & 0.062          & 0.05\%      \\ 	\Xhline{0.7pt}
		TATMAS Pool    & \begin{tabular}[c]{@{}c@{}}104.18.40.151, 172.67.148.229\\ 104.1841.151\end{tabular}                 & 0.0496         & 0.04\%      \\ 	\Xhline{0.7pt}
		BitClub        & 213.173.105.14                                                                                          & 0.0496         & 0.04\%      \\ 	\Xhline{0.7pt}
		Sigmapool.com  & 18.156.81.156                                                                                           & 0.0372         & 0.03\%      \\  	\Xhline{0.7pt}
		KanoPool       & 45.77.7.149                                                                                             & 0.0124         & 0.01\%      \\ 	\Xhline{0.7pt}
		Solo CK        & 51.81.56.15                                                                                             & 0.0124         & 0.01\%      \\ 	\Xhline{0.7pt}
	\end{tabular}
\end{table}


\bibliographystyle{IEEEtran}
\bibliography{reference}

\begin{thebibliography}{10}
\providecommand{\url}[1]{#1}
\csname url@samestyle\endcsname
\providecommand{\newblock}{\relax}
\providecommand{\bibinfo}[2]{#2}
\providecommand{\BIBentrySTDinterwordspacing}{\spaceskip=0pt\relax}
\providecommand{\BIBentryALTinterwordstretchfactor}{4}
\providecommand{\BIBentryALTinterwordspacing}{\spaceskip=\fontdimen2\font plus
\BIBentryALTinterwordstretchfactor\fontdimen3\font minus
  \fontdimen4\font\relax}
\providecommand{\BIBforeignlanguage}[2]{{%
\expandafter\ifx\csname l@#1\endcsname\relax
\typeout{** WARNING: IEEEtran.bst: No hyphenation pattern has been}%
\typeout{** loaded for the language `#1'. Using the pattern for}%
\typeout{** the default language instead.}%
\else
\language=\csname l@#1\endcsname
\fi
#2}}
\providecommand{\BIBdecl}{\relax}
\BIBdecl

\bibitem{sanders2019facebook}
J.~Sanders and D.~Patterson, ``Facebook data privacy scandal: A cheat sheet,''
  2019.

\bibitem{hayes2017geolocation}
D.~R. Hayes, C.~Snow, and S.~Altuwayjiri, ``Geolocation tracking and privacy
  issues associated with the uber mobile application,'' in \emph{Proceedings of
  the Conference on Information Systems Applied Research ISSN}, vol. 2167,
  2017, p. 1508.

\bibitem{alexopoulos2017mcmix}
N.~Alexopoulos, A.~Kiayias, R.~Talviste, and T.~Zacharias, ``Mcmix: Anonymous
  messaging via secure multiparty computation,'' in \emph{26th $\{$USENIX$\}$
  Security Symposium ($\{$USENIX$\}$ Security 17)}, 2017, pp. 1217--1234.

\bibitem{chaum1981untraceable}
D.~L. Chaum, ``Untraceable electronic mail, return addresses, and digital
  pseudonyms,'' \emph{Communications of the ACM}, vol.~24, no.~2, pp. 84--90,
  1981.

\bibitem{lu2019survey}
T.~Lu, Z.~Du, and Z.~J. Wang, ``A survey on measuring anonymity in anonymous
  communication systems,'' \emph{IEEE Access}, vol.~7, pp. 70\,584--70\,609,
  2019.

\bibitem{danezis2008survey}
G.~Danezis and C.~Diaz, ``A survey of anonymous communication channels,''
  Technical Report MSR-TR-2008-35, Microsoft Research, Tech. Rep., 2008.

\bibitem{edman2009anonymity}
M.~Edman and B.~Yener, ``On anonymity in an electronic society: A survey of
  anonymous communication systems,'' \emph{ACM Computing Surveys (CSUR)},
  vol.~42, no.~1, pp. 1--35, 2009.

\bibitem{boyan1997anonymizer}
J.~Boyan, ``The anonymizer-protecting user privacy on the web,'' 1997.

\bibitem{gabber1999consistent}
E.~Gabber, P.~B. Gibbons, D.~M. Kristol, Y.~Matias, and A.~Mayer, ``Consistent,
  yet anonymous, web access with lpwa,'' \emph{Communications of the ACM},
  vol.~42, no.~2, pp. 42--47, 1999.

\bibitem{chaum2017cmix}
D.~Chaum, D.~Das, F.~Javani, A.~Kate, A.~Krasnova, J.~De~Ruiter, and A.~T.
  Sherman, ``cmix: Mixing with minimal real-time asymmetric cryptographic
  operations,'' in \emph{International Conference on Applied Cryptography and
  Network Security}.\hskip 1em plus 0.5em minus 0.4em\relax Springer, 2017, pp.
  557--578.

\bibitem{prusty2011forensic}
S.~Prusty, B.~N. Levine, and M.~Liberatore, ``Forensic investigation of the
  oneswarm anonymous filesharing system,'' in \emph{Proceedings of the 18th ACM
  conference on Computer and communications security}, 2011, pp. 201--214.

\bibitem{sherr2010a3}
M.~Sherr, A.~Mao, W.~R. Marczak, W.~Zhou, B.~T. Loo, and M.~A. Blaze, ``A$^3$:
  An extensible platform for application-aware anonymity,'' 2010.

\bibitem{kong2003anodr}
J.~Kong and X.~Hong, ``Anodr: anonymous on demand routing with untraceable
  routes for mobile ad-hoc networks,'' in \emph{Proceedings of the 4th ACM
  international symposium on Mobile ad hoc networking \& computing}, 2003, pp.
  291--302.

\bibitem{syverson2004tor}
P.~Syverson, R.~Dingledine, and N.~Mathewson, ``Tor: The secondgeneration onion
  router,'' in \emph{Usenix Security}, 2004, pp. 303--320.

\bibitem{freedman2002tarzan}
M.~J. Freedman and R.~Morris, ``Tarzan: A peer-to-peer anonymizing network
  layer,'' in \emph{Proceedings of the 9th ACM conference on Computer and
  communications security}, 2002, pp. 193--206.

\bibitem{conti2016survey}
M.~Conti, N.~Dragoni, and V.~Lesyk, ``A survey of man in the middle attacks,''
  \emph{IEEE Communications Surveys \& Tutorials}, vol.~18, no.~3, pp.
  2027--2051, 2016.

\bibitem{nakamoto2019bitcoin}
S.~Nakamoto, ``Bitcoin: A peer-to-peer electronic cash system,'' Manubot, Tech.
  Rep., 2019.

\bibitem{gomulkiewicz2004onions}
M.~Gomu{\l}kiewicz, M.~Klonowski, and M.~Kuty{\l}owski, ``Onions based on
  universal re-encryption--anonymous communication immune against repetitive
  attack,'' in \emph{International Workshop on Information Security
  Applications}.\hskip 1em plus 0.5em minus 0.4em\relax Springer, 2004, pp.
  400--410.

\bibitem{pereira2017marked}
O.~Pereira and R.~L. Rivest, ``Marked mix-nets,'' in \emph{International
  Conference on Financial Cryptography and Data Security}.\hskip 1em plus 0.5em
  minus 0.4em\relax Springer, 2017, pp. 353--369.

\bibitem{chaum1988dining}
D.~Chaum, ``The dining cryptographers problem: Unconditional sender and
  recipient untraceability,'' \emph{Journal of cryptology}, vol.~1, no.~1, pp.
  65--75, 1988.

\bibitem{kotzanikolaou2017broadcast}
P.~Kotzanikolaou, G.~Chatzisofroniou, and M.~Burmester, ``Broadcast anonymous
  routing (bar): scalable real-time anonymous communication,''
  \emph{International Journal of Information Security}, vol.~16, no.~3, pp.
  313--326, 2017.

\bibitem{egger2013practical}
C.~Egger, J.~Schlumberger, C.~Kruegel, and G.~Vigna, ``Practical attacks
  against the i2p network,'' in \emph{International workshop on recent advances
  in intrusion detection}.\hskip 1em plus 0.5em minus 0.4em\relax Springer,
  2013, pp. 432--451.

\bibitem{piotrowska2017loopix}
A.~M. Piotrowska, J.~Hayes, T.~Elahi, S.~Meiser, and G.~Danezis, ``The loopix
  anonymity system,'' in \emph{26th $\{$USENIX$\}$ Security Symposium
  ($\{$USENIX$\}$ Security 17)}, 2017, pp. 1199--1216.

\bibitem{jansen2014sniper}
R.~Jansen, F.~Tschorsch, A.~Johnson, and B.~Scheuermann, ``The sniper attack:
  Anonymously deanonymizing and disabling the tor network,'' Office of Naval
  Research Arlington VA, Tech. Rep., 2014.

\bibitem{chothia2005survey}
T.~Chothia and K.~Chatzikokolakis, ``A survey of anonymous peer-to-peer
  file-sharing,'' in \emph{International Conference on Embedded and Ubiquitous
  Computing}.\hskip 1em plus 0.5em minus 0.4em\relax Springer, 2005, pp.
  744--755.

\bibitem{ruffing2017p2p}
T.~Ruffing, P.~Moreno-Sanchez, and A.~Kate, ``P2p mixing and unlinkable bitcoin
  transactions.'' in \emph{NDSS}, 2017, pp. 1--15.

\bibitem{han2008mutual}
J.~Han and Y.~Liu, ``Mutual anonymity for mobile p2p systems,'' \emph{IEEE
  Transactions on Parallel and Distributed Systems}, vol.~19, no.~8, pp.
  1009--1019, 2008.

\bibitem{koblitz1987elliptic}
N.~Koblitz, ``Elliptic curve cryptosystems,'' \emph{Mathematics of
  computation}, vol.~48, no. 177, pp. 203--209, 1987.

\bibitem{luo2019image}
Y.~Luo, X.~Ouyang, J.~Liu, and L.~Cao, ``An image encryption method based on
  elliptic curve elgamal encryption and chaotic systems,'' \emph{IEEE Access},
  vol.~7, pp. 38\,507--38\,522, 2019.

\bibitem{jager2015verifiable}
T.~Jager, ``Verifiable random functions from weaker assumptions,'' in
  \emph{Theory of Cryptography Conference}.\hskip 1em plus 0.5em minus
  0.4em\relax Springer, 2015, pp. 121--143.

\bibitem{garay2015bitcoin}
J.~Garay, A.~Kiayias, and N.~Leonardos, ``The bitcoin backbone protocol:
  Analysis and applications,'' in \emph{Annual International Conference on the
  Theory and Applications of Cryptographic Techniques}.\hskip 1em plus 0.5em
  minus 0.4em\relax Springer, 2015, pp. 281--310.

\bibitem{yu2019repucoin}
J.~Yu, D.~Kozhaya, J.~Decouchant, and P.~Esteves-Verissimo, ``Repucoin: Your
  reputation is your power,'' \emph{IEEE Transactions on Computers}, vol.~68,
  no.~8, pp. 1225--1237, 2019.

\bibitem{hohenberger2014anonize}
S.~Hohenberger, S.~Myers, R.~Pass \emph{et~al.}, ``Anonize: A large-scale
  anonymous survey system,'' in \emph{2014 IEEE Symposium on Security and
  Privacy}.\hskip 1em plus 0.5em minus 0.4em\relax IEEE, 2014, pp. 375--389.

\bibitem{heilman2015eclipse}
E.~Heilman, A.~Kendler, A.~Zohar, and S.~Goldberg, ``Eclipse attacks on
  bitcoin’s peer-to-peer network,'' in \emph{24th $\{$USENIX$\}$ Security
  Symposium ($\{$USENIX$\}$ Security 15)}, 2015, pp. 129--144.

\bibitem{marcus2018low}
Y.~Marcus, E.~Heilman, and S.~Goldberg, ``Low-resource eclipse attacks on
  ethereum's peer-to-peer network.'' \emph{IACR Cryptol. ePrint Arch.}, vol.
  2018, p. 236, 2018.

\end{thebibliography}


\begin{IEEEbiography}{Michael Shell}
Biography text here.
\end{IEEEbiography}

\begin{IEEEbiographynophoto}{John Doe}
Biography text here.
\end{IEEEbiographynophoto}


\begin{IEEEbiographynophoto}{Jane Doe}
Biography text here.
\end{IEEEbiographynophoto}

\end{document}